\documentclass[letterpaper,11pt]{article}
\usepackage[usenames,dvipsnames]{xcolor}
\usepackage{amsfonts}
\usepackage{amsmath,amsthm,amssymb,dsfont,pifont}
\usepackage{enumerate}
\usepackage[english]{babel}
\usepackage{graphicx}	
\usepackage{subcaption}
\usepackage[margin=1in]{geometry}
\usepackage{url}
\usepackage{todonotes}
\usepackage{bbm}
\usepackage{caption}
\usepackage{comment}
\usepackage{cite}
\usepackage{physics}
\usepackage{breqn}

\usepackage{thm-restate}
\usepackage{epsfig}
\usetikzlibrary{shapes.symbols,patterns}
\usepackage{hyperref}[breaklinks]
\hypersetup{colorlinks=true,citecolor=blue,linkcolor=blue,filecolor=blue,urlcolor=blue}
\usepackage{cleveref}

\usepackage{nicefrac}
\usepackage{mathtools}

\usepackage{caption}
\usepackage{algorithm}
\usepackage{algpseudocode}
\algnewcommand{\Input}[1]{\Statex \textbf{Input:} #1}
\algnewcommand{\Output}[1]{\Statex \textbf{Output:} #1}

\algrenewcommand{\algorithmicrequire}{\textbf{Input:}}
\algrenewcommand{\algorithmicensure}{\textbf{Output:}}
 
\theoremstyle{plain}
\newtheorem{theorem}{Theorem}
\newtheorem{lemma}[theorem]{Lemma}

\newtheorem{proposition}[theorem]{Proposition}
\newtheorem{conjecture}[theorem]{Conjecture}

\theoremstyle{definition}
\newtheorem{definition}[theorem]{Definition}

\newcommand*{\E}{\mathop{\mathbb{E}}}

\newcommand*{\cC}{\mathcal{C}}

\newcommand*{\cF}{\mathcal{F}}

\newcommand*{\cP}{\mathcal{P}}

\newcommand*{\U}{\mathcal{U}}

\newcommand*{\Z}{\mathbb{Z}}
\newcommand*{\R}{\mathbb{R}}

\newcommand*{\C}{\mathbb{C}}
\newcommand{\F}{\mathbb{F}}

\newcommand*{\eps}{\varepsilon}

\usepackage{preamble}

\def\Fdnn{\F_d^{2n}}
\def\Fdn{\F_d^{n}}
\def\Fd{\F_d}

\usepackage[normalem]{ulem}

\usepackage{enumitem}
\usepackage{authblk}
\usepackage{thm-restate}
\def\hat{\widehat}
\def\hU{\hat{U}}

\let\norms\norm
\def\norm{\norms*}
\let\abss\abs
\def\abs{\abss*}
\def\Pnorm{P}

\begin{document}
\title{On Clifford hierarchy testing and near-extremizers of noncommutative uniformity norms}
 \author{Zongbo (Bob) Bao\thanks{Centrum Wiskunde \& Informatica (CWI) and QuSoft, Amsterdam, \texttt{zongbo.bao@cwi.nl}} \;\;\;Jop Bri{\"e}t \thanks{Centrum Wiskunde \& Informatica (CWI) and   QuSoft, Amsterdam, and Leiden University \texttt{j.briet@cwi.nl}}\;\;\; Davi Castro-Silva \thanks{University of Cambridge, Cambridge, \texttt{dd654@cam.ac.uk}}\\ {Philippe van Dordrecht} \thanks{Centrum Wiskunde \& Informatica (CWI) and QuSoft, Amsterdam, \texttt{Philippe.Dordrecht@cwi.nl}}\;\;\; Jonas Helsen\thanks{Centrum Wiskunde \& Informatica (CWI) and QuSoft, Amsterdam, \texttt{jonas@cwi.nl}}
   }
   \date{\today}
\maketitle
\vspace{-2.5em}
\begin{abstract}
We consider the problem of testing whether an unknown unitary is close to a specified level of the Clifford hierarchy.
Bu, Gu, and Jaffe proposed a candidate tester for this task based on a connection with noncommutative analogues of the Gowers uniformity norms.
The complexity of this tester---whose analysis depends on a robust characterization of the near-extremizers of these norms---was left open.
We establish such a characterization for the fourth noncommutative uniformity norm and, as a consequence, obtain an efficient tester for the third level of the Clifford hierarchy.
We further discuss possible routes toward resolving the problem of testing for all higher levels, highlighting the main barriers that remain.
\end{abstract}

\section{Introduction}
The challenge of manipulating quantum information in the presence of noise led to the development of quantum error-correcting codes. The task of performing gates on data encoded in such error correction codes, while controlling the propagation of errors, lies at the heart of quantum fault tolerance. Stabilizer codes are the most widely used class of quantum error-correcting codes, because of their close connection to classical linear codes and relative ease of manipulation. In these codes the so-called Clifford gates can in many cases be implemented fault-tolerantly.
However, Clifford operations alone are insufficient for universal quantum computation, being classically simulable by the Gottesman--Knill theorem.
Gottesman and Chuang addressed this limitation through gate teleportation, a framework that enables non-Clifford operations using only Clifford gates together with special resource states known as magic states \cite{gottesman1999teleportation}.
In the same work, they introduced the \emph{Clifford hierarchy}, a nested sequence of gate sets that captures the operations implementable in this framework.

Since then, the Clifford hierarchy has been studied extensively, and has found numerous applications in quantum computation and the design of fault-tolerant protocols
\cite{bravyi2013classification, warman2025transversal, hu2025climbing, xu2026controlled, manjunath2026universal}.
Moreover, its algebraic structure has been a topic of substantial study
\cite{anderson2024groups, chen2024characterising, cui2017diagonal, de2021efficient, gottesman1999teleportation, he2024permutation, pllaha2020weyl, rengaswamy2019unifying, zeng2008semi, zhou2000methodology}. However, many fundamental questions about this algebraic structure remain unresolved.

In this paper, we study the problem of \emph{testing membership in the Clifford hierarchy}. We do this for two main reasons.
First, operations from the Clifford hierarchy---particularly those in the first three levels---play a central role in quantum computation and quantum error correction \cite{kobayashi2025clifford}, making it natural to ask whether efficient testing algorithms for them exist.
The second level of the Clifford hierarchy coincides with the Clifford group. The problems of stabilizer and Clifford testing were first considered in~\cite{low2009learning,wang2011property} and amount to determining whether an unknown state or unitary is close to a stabilizer state or Clifford operator, or far from every such object.
These problems have seen significant recent progress
\cite{Gross_2021, grewal2024improved, arunachalam2024polynomialtimetoleranttestingstabilizer, bao2024toleranttestingstabilizerstates, mehraban2025improved, iyer2024tolerant, hinsche2025clifford, arunachalam2025learning, castro2026algorithmic}.
Testing membership in higher levels of the Clifford hierarchy can therefore be viewed as a natural extension of this line of work.

Second, the problem of testing Clifford hierarchy membership is closely connected to inverse theorems for a noncommutative analogue of the Gowers uniformity norms, introduced in \cite{bu2025quantum} under the name \emph{quantum uniformity measure}.
In additive combinatorics, the uniformity norms first appeared in the foundational works of Gowers~\cite{gowers1998new,Gowers2001}.
Deep inverse theorems show that these norms detect polynomial structure in functions over abelian groups \cite{  green2005inverse, TaoZiegler:2010,eisnertao2012,tao2007structure}, which make them useful in the analysis of algebraic property testing algorithms~\cite{Samorodnitskylowdegreetests, hatami2019higher}.
Remarkably, the analysis of stabilizer and Clifford testing algorithms leads to inverse theorems for these norms in the setting of quantum states and unitaries, thereby extending their connection with polynomial structure to the Hilbert space and noncommutative setting of quantum information.
Here we continue this research direction for higher orders of uniformity.

\subsection{Main Results}

Our main result is an efficient tolerant tester for the third level of the Clifford hierarchy $\cC^{(3)}$, which is the first level beyond the Clifford group and contains, for instance, the $T$ gate.
The tester is tolerant in the sense that it can identify not only strict membership in the third level of the Clifford hierarchy, but also approximate membership.
To make this precise, given a unitary $U\in \U(d^n)$, we define the \emph{degree-$k$ Clifford fidelity} by
\begin{equation}\label{eq:fidelity}
    \cF_{\cC^{(k)}}(U):=\max_{V\in \cC^{(k)}} \abs{\inner{V,U}}^2,
\end{equation}
where $\inner{V,U}:= d^{-n}\tr(V^*U)$ is the normalized trace inner product.
The algorimthic problem we consider here is to distinguish whether $U$ has degree-$3$ Clifford fidelity close to $1$ or far from $1$.
The formal statement of our main result, which assumes query access to both~$U$ and its inverse, is given as follows:

\begin{theorem}[Tolerant tester for $\cC^{(3)}$]
\label{thm:main-tester}
    There exists a constant $C>1$ such that the following holds.
    Given $\eps>0$ and query access to a unitary $U\in \U(d^n)$ and its inverse $U^*$, there is an efficient algorithm $\textsc{C3Tester}(n, d, U, U^*, \vep)$ that makes $O\p{1/\vep}$ queries to $U$ and $U^*$, and satisfies the following with probability at least $0.9$:
    \begin{enumerate}
        \item if $\cF_{\cC^{(3)}}(U) \ge 1-\vep$,
        $\textsc{C3Tester}\p{n, d, U, U^*, \vep}$ outputs $1$;
        \item if $\cF_{\cC^{(3)}}(U) \le 1-C\vep$,
        $\textsc{C3Tester}\p{n, d, U, U^*, \vep}$ outputs $0$.
    \end{enumerate}
\end{theorem}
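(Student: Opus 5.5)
The tester estimates the fourth noncommutative uniformity norm of $U$ (the quantum uniformity measure of~\cite{bu2025quantum}) and accepts exactly when this estimate is close to $1$. Correctness then rests on two facts: stability of the norm under perturbation, which gives completeness, and a robust inverse theorem for near-extremizers, which gives soundness.

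\textbf{Step 1: the statistic and how to estimate it.} Use the recursive characterization of the hierarchy: $V\in\cC^{(3)}$ iff for every Weyl operator $W_a$ the operator $V W_a V^* W_a^{*}$ lies in $\cC^{(2)}$, up to phase. The degree-$3$ norm is built from the group commutators $\Delta_a U := U W_a U^* W_a^*$. Taking Clifford-ness of each $\Delta_aU$ one level further down, the quantity $\|U\|_{U^4}^{16}$ is an average over $a,b$ of a Pauli-sampling quantity of the second-order commutator $\Delta_b\Delta_a U$. This gives the following test.
\begin{enumerate}
\item Sample random $a,b$.
\item Prepare the Choi state of $\Delta_b\Delta_aU$. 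Each copy uses $O(1)$ queries to $U$ and $U^*$.
\item Run the Bell-difference-sampling / Weyl-commutation test on it, which accepts with probability affine in the quantity being averaged.
\end{enumerate}
The acceptance probability of one round is therefore an affine function $p(U)=\alpha+\beta\|U\|_{U^4}^{16}$. Repeating $O(1/\eps)$ times and thresholding at $1-c\eps$ separates $p\ge 1-O(\eps)$ from $p\le 1-\Omega(C\eps)$ with constant probability. This is because we are distinguishing a Bernoulli parameter near $1$: a one-sided count of rejections suffices, and no $1/\eps^2$ samples are needed.

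\textbf{Step 2: completeness.} If $|\langle V,U\rangle|^2\ge1-\eps$ with $V\in\cC^{(3)}$, then $U$ is $O(\sqrt\eps)$-close to $e^{i\theta}V$ in normalized $2$-norm. Since $\|V\|_{U^4}=1$, I would bound $1-p(U)$ directly, not via a Lipschitz estimate in $2$-norm, which would only give $O(\sqrt\eps)$. Write each rejection event as a projector. The test's acceptance on $\Delta_b\Delta_aU$ is a fidelity-type quantity, and for each of the $O(1)$ copies of $U$ inside it we replace $U$ by $V$. The loss per replacement is $1-|\langle V,U\rangle|^2$ in expectation over the random Weyl operators, because the average over $a$ twirls the error. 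The result is $1-p(U)\le O(\eps)$.

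\textbf{Step 3: soundness, the main obstacle.} We need the inverse theorem: if $\|U\|_{U^4}^{16}\ge1-\delta$, then $\cF_{\cC^{(3)}}(U)\ge 1-O(\delta)$, with linear dependence and constants independent of $n,d$.

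\emph{Tier 1.} For most $a$, the operator $\Delta_aU$ has $U^3$-norm near $1$. The already-established robust Clifford inverse theorem (the degree-$2$ case from Clifford testing) then yields Cliffords $C_a$ with $|\langle C_a,\Delta_aU\rangle|^2\ge1-\delta_a$, where $\E_a\delta_a=O(\delta)$.

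\emph{Tier 2.} The map $a\mapsto C_a$ must approximately satisfy the cocycle identity $C_{a+b}\approx C_a\,(U W_a C_b W_a^* U^*)\cdots$ inherited from $\Delta_{a+b}$. I would use a Pauli-stability argument to correct it to an exact cocycle:
\begin{itemize}
\item Cliffords modulo Paulis are a finite group with a discrete structure, and the Clifford group has a linear-algebraic description via symplectic matrices $S_a$.
\item A BLR / $99\%$ linearity argument, applied first to the symplectic parts $S_a$, makes $a\mapsto S_a$ (and hence $U$ conjugation) exactly of the form coming from a $\cC^{(3)}$ element, i.e.\ a quadratic-form-type object.
\item The remaining Pauli parts are then fixed by a second linearity-testing step.
\end{itemize}
Finally we must produce $V\in\cC^{(3)}$ with $\Delta_aV=C_a$ and show $|\langle V,U\rangle|^2\ge1-O(\delta)$. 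I would do this by averaging $U^*V$ over Weyl twirls: it is approximately commuting with all $W_a$, hence close to a scalar.

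I expect the linear-loss cocycle correction, keeping $O(\delta)$ with no polynomial degradation and no dependence on $n$, to be the hardest part. If linear loss fails, a polynomial loss $\delta^{c}$ would still give an efficient tester, at the cost of modifying the constant-$C$ form of the statement.
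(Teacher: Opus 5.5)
\textbf{Verdict.} Steps~1 and~2 are fine in outline, but Step~3 has a genuine gap: you never show how to obtain $V\in\cC^{(3)}$, and you leave the required linear loss open.

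\textbf{Steps 1 and 2.} Your outer structure matches the paper's. You estimate $\|U\|_{P^4}^{16}$ by a test whose acceptance probability is affine in it, get completeness from a stability fact, and get soundness from a linear-loss $99\%$ inverse theorem for $P^4$.
\begin{itemize}
\item Your Step~1 remark is correct. The rejection probability is $\frac12(1-\|U\|_{P^4}^{16})$, which is $\Theta(\eps)$ in both cases with a constant-factor gap. So $O(1/\eps)$ plain repetitions plus a multiplicative Chernoff bound suffice, which is a simpler alternative to the amplitude estimation the paper invokes.
\item In Step~2 the ``twirling'' explanation is not a valid argument. You do not need it: the direct inequality $\cF_{\cC^{(3)}}(U)\le\|U\|_{P^4}$ (Proposition~\ref{prop:directinequality}) immediately gives $\|U\|_{P^4}^{16}\ge(1-\eps)^{16}\ge 1-16\eps$.
\end{itemize}

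\textbf{The cocycle identity.} It is misstated. With your convention, $\Delta_{a+b}U=\Delta_aU\cdot W_a(\Delta_bU)W_a^*$ exactly, with conjugation by $W_a$ only and no $U$. This form is what keeps everything inside the Clifford group.

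\textbf{Constructing $V$.} This is the more serious gap. Your BLR step on symplectic parts, meant to force a ``quadratic-form-type'' structure, at best repairs the $C_a$ modulo phases; it does not produce $V$ with $\Delta_aV=C_a$. The paper's missing ingredient is representation-theoretic:
\begin{itemize}
\item Since $|A|\ge\frac{2}{3}d^{2n}$, every $h$ equals $a+b$ with $a,b\in A$. So $Q_h:=W_bQ_aW_b^*Q_b\in\cC^{(2)}$ approximates $\partial_hU$ for every $h$, with no BLR correction.
\item The Separation Lemma~\ref{lemma:separation} turns $Q_{h+h'}\approx W_{h'}Q_hW_{h'}^*Q_{h'}$ into an exact identity up to a phase $e^{i\theta_{h',h}}$ with $|\theta_{h',h}|=O(\sqrt\delta)$.
\item Because the phases are small, $\theta$ is a genuine real $2$-cocycle. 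Averaging over the third variable shows it is a coboundary, $\theta_{h,h'}=b_h+b_{h'}-b_{h+h'}$.
\item Then $\phi(\tau^pW_h)=e^{ib_h}\tau^pQ_h^*W_h$ is an honest representation of the Heisenberg group with the same character as the Weil representation. Hence $\phi(W_h)=VW_hV^*$ for some unitary $V$.
\item This gives $Q_h=e^{ib_h}\partial_hV$. So $V\in\cC^{(3)}$ automatically, since all its derivatives are Clifford; no structure has to be imposed.
\end{itemize}

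\textbf{Linear loss.} The same exactness settles the linear-loss question, which the statement requires; a $\delta^c$ loss would break both the $1-C\eps$ threshold and the $O(1/\eps)$ query bound. The only error is the initial $\|\partial_hU-Q_h\|_2=O(\sqrt\delta)$, which squares to $O(\delta)$. Hence $\|U^*V\|_{P^2}^4=\E_h|\langle\partial_hU,\partial_hV\rangle|^2\ge 1-8\delta$, and the $P^2$ inverse theorem gives a Pauli $P$ with $|\langle U,VP^*\rangle|^2\ge 1-8\delta$.

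\textbf{Your final step.} Concluding that $U^*V$ nearly commutes with all $W_a$ and is therefore nearly scalar requires tracking the phases of the $C_a$. Your plan does not do this, so it only yields $U^*V$ close to a Pauli, not a scalar. Finishing with the $P^2$ inverse theorem, as the paper does, handles exactly that.
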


The main ingredient in the analysis of this tester is a classification of the near-extremizers of the fourth \emph{Pauli uniformity norm} (or $P^4$ norm) for unitaries; these norms will be defined in \Cref{sec:prelims}.
The family of Pauli uniformity norms first appeared in the literature under the name of ``quantum uniformity measures'' in \cite{bu2025quantum}, and can be regarded as a noncommutative analogue of the Gowers uniformity norms for vector spaces over finite fields.
Bu, Gu, and Jaffe showed that the $P^k$ norm of a unitary matrix~$U$ upper bounds its degree-$(k-1)$ Clifford fidelity $\cF_{\cC^{k-1}}(U)$, and conjectured a partial converse.
Informally, this converse says that a unitary with large $P^k$ norm must correlate well with some unitary in the $(k-1)$-st level of the Clifford hierarchy.
We refer to such statements as \emph{inverse conjectures} for the Pauli uniformity norms, by analogy with inverse theorems for the classical Gowers norms in additive combinatorics.

Inverse conjectures of this kind typically arise in two regimes, sometimes referred to as the ``$99\%$ setting'' and the ``$1\%$ setting.''
In the $99\%$ setting, one assumes that the norm is close to maximal---heuristically, $\|U\|_{P^k} \geq 0.99$---and seeks to show that $U$ is close to an exact extremizer.
In this regime, the inverse problem amounts to a robust classification of near-extremizers.
In the $1\%$ setting, one assumes only that the norm is non-negligible---say, $\|U\|_{P^k} \geq 0.01$---and seeks to prove that $U$ has non-negligible correlation with a structured matrix, such as an element of level $k-1$ of the Clifford hierarchy.
Although these two regimes are qualitatively different, and neither type of theorem formally implies the other, the $99\%$ inverse problem is generally much more accessible than the corresponding $1\%$ inverse problem.

In \Cref{sec:warmup}, we establish inverse theorems in both regimes for $k=2$, using a straightforward adaptation of the standard proof in the commutative setting;
see, e.g., \cite{green2007montreallecturenotesquadratic}.
The case $k=3$ was recently treated in \cite{hinsche2025clifford}, where inverse theorems of this type were used to obtain efficient tolerant testers for Clifford unitaries, corresponding to the second level of the Clifford hierarchy.
Our main technical contribution is to prove the next case in the near-extremal regime, namely a $99\%$ inverse theorem for $k=4$:

\begin{theorem}[99\% inverse theorem for the $P^4$ norm]
\label{thm:intro-main}
    There exists a constant $L>1$ such that for any prime number $d$, natural number $n\geq 1$ and unitary $U\in \U\p{d^n}$, we have
    $$\max_{V\in \cC^{(3)}} \abs{\inner{V,U}}^2 \ge 1 - L\p{1 - \norm{U}_{\Pnorm^4}}.$$
\end{theorem}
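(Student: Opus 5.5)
We argue by induction on the level, reducing the $P^4$ statement for $U$ to the $P^3$ (Clifford) $99\%$ inverse theorem of \cite{hinsche2025clifford} applied to the "derivatives" of $U$. The starting point is the identity expressing $\norm{U}_{\Pnorm^4}^{16}$ as an average over Paulis $P\in\cP_n$ (Weyl operators indexed by $\Fdnn$) of $\norm{\partial_P U}_{\Pnorm^3}^{8}$, where $\partial_P U := U^* P U P^*$ (or $UPU^*P^*$, depending on the paper's convention). If $\norm{U}_{\Pnorm^4}\ge 1-\delta$, then since each $\norm{\partial_PU}_{\Pnorm^3}\le 1$, Markov's inequality shows that for all but an $O(\sqrt{\delta})$ fraction of $P$ (or in an averaged $L^1$ sense, which suffices) we have $\norm{\partial_PU}_{\Pnorm^3}\ge 1-O(\delta)$. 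The $k=3$ theorem then gives, for each such $P$, a Clifford $C_P$ with $\abs{\inner{C_P,\partial_P U}}^2\ge 1-O(\delta)$. We keep the loss linear in $\delta$ by working with averages rather than uniform bounds.

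We then exploit the cocycle structure. We have $\partial_{PQ}U = (\text{phase})\cdot U^*PU\,\partial_QU\,U^*P^*U\cdot$ (conjugated) $\partial_PU$. Equivalently, the map $P\mapsto U^*PU$ approximately sends Paulis to Cliffords, namely $U^*PU\approx C_P P$. The goal is a function $\phi:\Fdnn\to\cC^{(2)}$ with $\phi(a)\approx U^*W_aU$ that is approximately a projective homomorphism: $\phi(a)\phi(b)\approx\omega^{\cdot}\phi(a+b)$ for most $a,b$. We would obtain this by averaging and comparing $\inner{C_{P}PC_Q Q, C_{PQ}PQ}$, using the Cauchy–Schwarz-type consequence of high $P^4$ norm.

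The main step, and the one we expect to be the obstacle, is the $99\%$ stability for approximate homomorphisms into the Clifford group modulo phases. The classical tool is a BLR / Gowers–Hatami stability theorem: an $\eps$-approximate representation of the abelian-by-phase group $\cP_n$ is $O(\eps)$-close to an exact representation $a\mapsto V^*W_aV\cdot(\text{Clifford correction})$. Here the target is the Clifford group, whose quotient by Paulis is $\mathrm{Sp}(2n,d)\ltimes$ phases. We would first round each $C_P$ to its symplectic image. Then we would run a classical BLR-type argument over $\F_d$ with $O(\delta)$ linear loss, in the style of the Samorodnitsky/Gowers–Hatami $99\%$ tests, to obtain a consistent Clifford cocycle. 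Its correctability to an exact one follows because a cocycle from $\Fdnn$ to the Clifford group is a coboundary up to Pauli-level terms, by a quadratic-form lifting argument for $d$ prime, which may need care for $d=2$. We expect dimension-independence of the constant $L$ to be the delicate issue; Gowers–Hatami stability gives dimension-free bounds, which is why we would use it rather than any counting argument.

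Finally, given an exact projective representation $\rho(a)$ with $\rho(a)\in\cC^{(2)}W_a$ approximating $U^*W_aU$ on average, Gowers–Hatami yields an isometry/unitary $V$ with $V^*W_aV=\rho(a)$ and $\abs{\inner{V,U}}^2\ge 1-O(\delta)$ after accounting for phases. This is because $U$ intertwines approximately, and by Schur's lemma (irreducibility of the Weyl representation) the intertwiner is unique up to scalar. By construction $VW_aV^*\in\cC^{(2)}$ for all $a$, so $V\in\cC^{(3)}$. Since $1-\norm{U}_{\Pnorm^4}^{16}\le 16(1-\norm{U}_{\Pnorm^4})$, all losses are $O(1-\norm{U}_{\Pnorm^4})$, which gives the constant $L$.
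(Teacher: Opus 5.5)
You have the right skeleton: derivatives are near-Clifford on a dense set of directions, these are assembled into an exact Clifford-valued projective representation, and uniqueness then yields $V\in\cC^{(3)}$ close to $U$. But the step you flag as ``the obstacle'' is the whole content of the proof, and the tools you name do not supply it.

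A Gowers--Hatami-type stability theorem is vacuous here. The map $a\mapsto U^*W_aU$ is already an \emph{exact} projective representation close to your approximate one, so stability returns nothing new. Such theorems also give no reason for the nearby exact representation to be Clifford-valued. Moreover, Gowers--Hatami may require dilating to a larger space, which is incompatible with concluding $V\in\cC^{(3)}$. Your alternative is to round $C_a$ to its image in $\mathrm{Sp}(2n,d)$ and run BLR there. That presupposes the approximate cocycle relations among the $C_a$ hold \emph{exactly} modulo Paulis and phases, which you never justify. The claim that Clifford-valued cocycles are coboundaries up to Pauli terms ``by a quadratic-form lifting argument'' is asserted, not proved. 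You would also still have to choose the Pauli parts of the lifts consistently with the data $C_a$.

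The missing idea is rigidity, which makes stability theory unnecessary. The paper combines three facts.
\begin{enumerate}
\item The Separation Lemma: a Clifford $W$ with $\|W-I\|_2<2^{-1/2}$ equals $e^{i\theta}I$ with $|\theta|\le 2\|W-I\|_2$.
\item $\cC^{(2)}$ is closed under products and Pauli conjugation.
\item $A+A=\Fdnn$ whenever $|A|>d^{2n}/2$. Here $A=\{a:\|\partial_aU\|_{P^3}^8\ge 1-3\eps\}$, which has density at least $2/3$ by Markov. Markov does not give loss $O(\delta)$ outside an $O(\sqrt\delta)$ fraction, as you state, but constant density is all that is needed.
\end{enumerate}
Write every $h=a+b$ with $a,b\in A$ and set $Q_h:=W_bQ_aW_b^*Q_b\in\cC^{(2)}$. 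This gives $\|\partial_hU-Q_h\|_2\le 2\sqrt{2\delta}$ for \emph{every} $h$. The exact identity $\partial_{h+h'}U=W_{h'}\,\partial_hU\,W_{h'}^*\,\partial_{h'}U$ together with separation then gives $W_{h'}Q_hW_{h'}^*Q_{h'}=e^{i\theta_{h',h}}Q_{h+h'}$ exactly for \emph{all} pairs, with $|\theta_{h',h}|=O(\sqrt\delta)$.

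Because these phases are small, their cocycle identity holds in $\R$ rather than $\R/2\pi\Z$. Averaging over the third variable exhibits $\theta$ as a coboundary, so $\phi(\tau^pW_h)=e^{ib_h}\tau^pQ_h^*W_h$ is a genuine representation. The conclusion then comes from Stone--von Neumann uniqueness, proved by checking $\operatorname{tr}\phi(\tau^pW_h)=0$ for $h\neq 0$, not from Gowers--Hatami. This gives $Q_h=e^{ib_h}\partial_hV$, hence $V\in\cC^{(3)}$.

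The residual phases $e^{ib_h}$, which your ``after accounting for phases'' glosses over, are handled as follows. One has $\|U^*V\|_{P^2}^4=\E_h|\langle\partial_hU,\partial_hV\rangle|^2\ge 1-8\delta$. The $P^2$ inverse theorem then gives a Pauli $P$ with $|\langle U,VP^*\rangle|^2\ge 1-8\delta$.

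One smaller slip: $V^*W_aV\in\cC^{(2)}$ for all $a$ gives $V^*\in\cC^{(3)}$, not $V\in\cC^{(3)}$. You need closure of $\cC^{(3)}$ under inverses to conclude.
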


The proof of this theorem is based on an inductive argument of Eisner and Tao \cite{eisnertao2012} in the commutative setting, and points to a possible route toward a $99\%$ inverse theorem for all $k\geq 2$.
The main obstruction to carrying out this strategy in full arises from the non-group structure of the Clifford hierarchy: for every $k\geq 3$, the $k$-th level is not closed under multiplication.
This lack of closure introduces barriers to extending the argument beyond $\cC^{(3)}$.
We discuss these barriers in \Cref{sec:barrier}, together with possible approaches for overcoming them.

\subsection{Open Problems}

We highlight three directions for future work.

\begin{itemize}
    \item \emph{Inverse-free testing.}
    Our tester requires query access both to the unknown unitary $U$ and to its inverse $U^*$.
    Is there an efficient inverse-free tester for the third level of the Clifford hierarchy?
    Such a tester is known for Clifford unitaries; see \cite[Section~4]{hinsche2025clifford}.
    The construction in that case relies on an understanding of the commutant of the Clifford group.
    We expect that removing access to $U^*$ at higher levels will require a better understanding of the commutants of higher levels of the Clifford hierarchy.

    \item \emph{The $1\%$ inverse problem for $P^4$.}
    Is there a $1\%$ inverse theorem for the $P^4$ norm?
    For stabilizer states and Clifford unitaries, polynomial-bound $1\%$ inverse theorems are known, leading to fully tolerant testers with polynomial dependence on the parameters    \cite{arunachalam2024polynomialtimetoleranttestingstabilizer, bao2024toleranttestingstabilizerstates, mehraban2025improved, hinsche2025clifford}.
    In the commutative setting, the best known $1\%$ inverse theorem for the fourth Gowers norm has quasipolynomial bounds \cite{milicevic2024quasipolynomial}.
    Can an analogous result be proved for the fourth Pauli uniformity norm?

    \item \emph{Higher levels.}
    Does a $99\%$ inverse theorem hold for Pauli uniformity norms of order larger than~$5$?
    Such results would immediately lead to efficient testers for higher levels of the Clifford hierarchy.
    In \Cref{sec:barrier}, we discuss the main barriers to extending our argument to higher orders, as well as possible ways to overcome them.
\end{itemize}

\section{Preliminaries}\label{sec:prelims}

In this section, we introduce some notation and state useful technical lemmas concerning Weyl operators, Pauli uniformity norms, and the Clifford hierarchy.

\subsection{Weyl Operators and the Heisenberg Group}

We follow the notation of \cite{Gross_2021, heinrich21thesis}.
Let $d$ be a prime number. 
Denote $D=2d$ if $d=2$, and $D=d$ otherwise.
Let $\omega = e^{2\pi i /d}$ and $\tau = (-1)^de^{i\pi /d}$, so that $\tau^2 = \omega$ and the order of $\tau$ is $D$.
Let $\abs{\cdot}: \Fd\rightarrow \lrs{0,1,\dots,d-1} \subset \Z$ be the natural identification map.

\begin{definition}[Weyl operators]
For $v\in \F_d^n$, define $X^v$ and $Z^v$ to be the unitary operators on $\C^{\F_d^n}$ given by $X^v\ket{u} = \ket{u+v}$ and $Z^v\ket{u} = \omega^{u\cdot v}\ket{u}$.
For $a=(u,v)\in \F_d^{2n}$, define the Weyl operator
\begin{equation*}
    W_a := \tau^{-(|u_1v_1| +\cdots+|u_nv_n|)}Z^uX^v.
\end{equation*}
\end{definition}

We use some basic facts about the multiplicative relations satisfied by the Weyl operators.
The standard symplectic inner product on $\F_d^{2n}$ is given by
\begin{equation*}
    [(u,v),(u',v')] := u\cdot v' - u'\cdot v.
\end{equation*}
The following proposition follows directly from the fact that the translation and phase operators satisfy the commutation relations $X^uZ^v = \omega^{-u\cdot v}Z^vX^u$.

\begin{proposition}\label{fact:weyl-calc}
    There exists a map $\beta:\Fdnn\times \Fdnn\rightarrow \Z_D$ such that, for any $a,b\in \Fdnn$, we have
    \begin{align*}
        W_aW_b &= \tau^{\beta(a,b)}W_{a+b}, \\
        W_aW_b &= \omega^{[a,b]}W_bW_a.
    \end{align*}
\end{proposition}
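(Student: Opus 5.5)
The plan is a direct computation from the definitions, using only the single-qudit relation $X^uZ^v=\omega^{-u\cdot v}Z^vX^u$. This relation holds on basis vectors: $X^uZ^v\ket{w}=\omega^{w\cdot v}\ket{w+u}$, while $Z^vX^u\ket{w}=\omega^{(w+u)\cdot v}\ket{w+u}$.

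Write $a=(u,v)$ and $b=(u',v')$, and set $c_a:=\tau^{-(|u_1v_1|+\cdots+|u_nv_n|)}$, so that $W_a=c_aZ^uX^v$.

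For the first identity, compute
\[
W_aW_b=c_ac_b\,Z^uX^vZ^{u'}X^{v'} .
\]
Moving $X^v$ past $Z^{u'}$ gives
\[
W_aW_b=c_ac_b\,\omega^{-v\cdot u'}Z^{u+u'}X^{v+v'},
\]
since $Z$'s commute with each other and so do $X$'s. Next observe that $Z^{u+u'}X^{v+v'}=c_{a+b}^{-1}W_{a+b}$. Hence $W_aW_b$ equals $W_{a+b}$ times the scalar $c_ac_bc_{a+b}^{-1}\omega^{-v\cdot u'}$. Every factor here is a power of $\tau$, because $\omega=\tau^2$. Since $\tau$ has order $D$, that exponent is well defined in $\Z_D$, and we define $\beta(a,b)$ to be it.

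The one subtle point is that the exponents $|u_iv_i|$ are integers in $\{0,\dots,d-1\}$, not elements of $\F_d$. A reduction mod $d$ therefore only changes $\tau$-exponents by multiples of $d$, which is not trivially harmless when $D=2d$. However, we never need to reduce: we simply record the integer exponent modulo $D$. This is exactly why $\beta$ takes values in $\Z_D$ rather than $\Z_d$, and no further obstacle arises.

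For the second identity, apply the first in both orders:
\[
W_aW_b=\tau^{\beta(a,b)}W_{a+b},\qquad W_bW_a=\tau^{\beta(b,a)}W_{a+b}.
\]
So $W_aW_b=\tau^{\beta(a,b)-\beta(b,a)}W_bW_a$, and it remains to evaluate this ratio.

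From the explicit formula, the terms $c_ac_bc_{a+b}^{-1}$ are symmetric in $a,b$ and cancel. What remains is
\[
\omega^{-v\cdot u'}\,\omega^{v'\cdot u}=\omega^{u\cdot v'-u'\cdot v}=\omega^{[a,b]}.
\]
Equivalently, one can obtain this directly by commuting the operators: $Z^uX^vZ^{u'}X^{v'}$ and $Z^{u'}X^{v'}Z^uX^v$ differ by $\omega^{u\cdot v'-u'\cdot v}$. This is well defined because $\omega$ has order $d$, so the scalars $c$ never enter.
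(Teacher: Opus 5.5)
Your proof is correct and is essentially the paper's argument. The paper only remarks that the proposition follows directly from the relation $X^uZ^v=\omega^{-u\cdot v}Z^vX^u$, and your computation $W_aW_b=c_ac_bc_{a+b}^{-1}\omega^{-v\cdot u'}W_{a+b}$ (exponent recorded modulo $D$, using $\omega=\tau^2$), together with the cancellation of the symmetric factor $c_ac_bc_{a+b}^{-1}$ that leaves $\omega^{u\cdot v'-u'\cdot v}=\omega^{[a,b]}$, supplies exactly those details.
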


The map~$\beta$ is given explicitly e.g.\ in~\cite[Section 3.3.1]{heinrich21thesis}, where it is also shown to satisfy the following properties:

\begin{lemma}\label{fact:beta}
    For any $a,b,c\in \Fdnn$, we have
    \begin{enumerate}
        \item (Cocycle condition)
            $\beta(a,b+c)+\beta(b,c) = \beta(a,b) + \beta(a+b,c)$;
        \item $\beta(a,b)-\beta(b,a) =  2[a,b]$.
    \end{enumerate}
    If $d=2$, the term $2[a,b]$ above is interpreted as $2\abs{[a,b]} \mod 4$.
\end{lemma}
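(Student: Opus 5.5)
The plan is to derive both properties directly from the two multiplication rules in \Cref{fact:weyl-calc}, without using the explicit formula for $\beta$. Two facts make this work. First, each $W_a$ is unitary, hence invertible, so it can be cancelled from an identity of the form $\tau^{x}W_a=\tau^{y}W_a$. Second, $\tau$ has order exactly $D$, so $\tau^{x}=\tau^{y}$ with $x,y\in\Z_D$ forces $x=y$ in $\Z_D$.

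For the cocycle condition, I would expand the triple product $W_aW_bW_c$ in two ways using associativity of matrix multiplication. Grouping as $(W_aW_b)W_c$ and applying \Cref{fact:weyl-calc} twice gives $\tau^{\beta(a,b)}W_{a+b}W_c=\tau^{\beta(a,b)+\beta(a+b,c)}W_{a+b+c}$. Grouping as $W_a(W_bW_c)$ gives $\tau^{\beta(b,c)}W_aW_{b+c}=\tau^{\beta(b,c)+\beta(a,b+c)}W_{a+b+c}$. Cancelling the invertible $W_{a+b+c}$ leaves $\tau^{\beta(a,b)+\beta(a+b,c)}=\tau^{\beta(b,c)+\beta(a,b+c)}$. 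Since $\tau$ has order $D$, the exponents agree in $\Z_D$, which is the cocycle identity.

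For the second property, I would write $W_aW_b=\tau^{\beta(a,b)}W_{a+b}$ and $W_bW_a=\tau^{\beta(b,a)}W_{a+b}$, and substitute both into the commutation relation $W_aW_b=\omega^{[a,b]}W_bW_a$. This gives $\tau^{\beta(a,b)}W_{a+b}=\omega^{[a,b]}\tau^{\beta(b,a)}W_{a+b}$. Cancelling $W_{a+b}$ and using $\omega=\tau^2$ yields $\tau^{\beta(a,b)-\beta(b,a)}=\tau^{2[a,b]}$, and the order-$D$ argument then gives equality in $\Z_D$.

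The only step needing care is making the exponent $2[a,b]$ meaningful in $\Z_D$, because $[a,b]$ lives in $\F_d$.
\begin{itemize}
\item For odd $d$ we have $D=d$, so $2[a,b]\in\F_d=\Z_D$ directly. Moreover $\omega^{[a,b]}=\tau^{2[a,b]}$ is well defined, since $\tau$ has order $d$.
\item For $d=2$ we have $D=4$. Here one must lift $[a,b]$ to $\abs{[a,b]}\in\{0,1\}$ and note that $\omega^{[a,b]}=\tau^{2\abs{[a,b]}}$, with the exponent read modulo $4$. This is exactly the interpretation stated in the lemma.
\end{itemize}
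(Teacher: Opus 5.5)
Your proof is correct, but it takes a different route from the paper. The paper gives no argument of its own for this lemma. It defers to Heinrich's thesis, where $\beta$ is written out explicitly (from the normalization $W_{(u,v)}=\tau^{-\sum_i|u_iv_i|}Z^uX^v$) and shown there to have these properties.

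You instead treat $\beta$ abstractly, as whatever exponent makes the first relation of \Cref{fact:weyl-calc} hold. You get the cocycle identity from associativity of $W_aW_bW_c$, and the antisymmetry identity from the commutation relation. Uniqueness of exponents in $\Z_D$ comes from $W_{a+b}\neq 0$ and $\tau$ having order exactly $D$. Your handling of $d=2$, reading $\omega^{[a,b]}$ as $\tau^{2\abs{[a,b]}}$ modulo $4$, is exactly what makes the second identity meaningful.

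What your route buys:
\begin{itemize}
\item It is shorter and self-contained.
\item It shows the identities are structural: they hold for any phase convention in which $W_aW_b\in\tau^{\Z}W_{a+b}$.
\item It makes clear that the cocycle condition is equivalent to associativity of the Heisenberg group law $(t,a)\bullet(t',a')$.
\end{itemize}

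What the explicit formula buys is the existence part of \Cref{fact:weyl-calc}, namely that the relative phase between $W_aW_b$ and $W_{a+b}$ is a power of $\tau$. You take this as given, which is legitimate here since the lemma is stated for the $\beta$ of that proposition. The formula also gives a concrete $\beta$ if its actual values are ever needed.
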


The Weyl operators multiplicatively generate the \emph{Pauli group}, given by
\begin{equation*}
    \cP_n(d) = \lrs{\tau^tW_a:\: t\in \Z_D, a\in \Fdnn}.
\end{equation*}
It is sometimes useful to think of this group more abstractly as a representation of the Heisenberg group.
The \emph{Heisenberg group} $H(\F_d^{2n})$ is given by the set $\Z_D \times \Fdnn$ with the group law
\begin{align*}
    (t,a)\bullet (t',a') := (t+t'+\beta\p{a,a'}, a+a').
\end{align*}
The \emph{Weil representation} $\rho: H(\F_d^{2n}) \to \cP_n(d)$ defined by $\rho(t, a) = \tau^tW_a$ is then easily seen to give a faithful unitary representation of the Heisenberg group.

\subsection{Pauli Uniformity Norm and Pauli Polynomials}

Let $L\p{d^n}$ be the set of linear maps from $(\C^d)^{\otimes n}$ to itself, and let $\U\p{d^n}\subseteq L\p{d^n}$ be the set of unitary maps.
We will use the normalized Hilbert-Schmidt inner product: for $U,V\in L\p{d^n}$,
\begin{equation*}
    \langle U,V\rangle :=\frac{1}{d^n}\tr[U^*V].
\end{equation*}
This inner product induces the Frobenius norm on $L\p{d^n}$, given by
\begin{equation*}
    \norm{U}_2:=\sqrt{\langle U,U\rangle}.
\end{equation*}
It is easy to see that this norm is unitarily invariant. Moreover,
for unitary matrices, the Frobenius norm and the trace-inner product satisfy the following simple relation.

\begin{lemma}\label{fact:relation-2normdistance-innerproduct}
    For $U,V\in \U\p{d^n}$, we have that
    $\min_{\theta\in [0,2\pi)}\norm{U-e^{i\theta}V}_2^2
    = 2 - 2\abs{\inner{U,V}}$.
\end{lemma}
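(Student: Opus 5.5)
The plan is to expand the squared Frobenius norm directly using the inner product and then optimize over the phase $\theta$. For unitaries $U,V$ we have $\norm{U}_2^2=\norm{V}_2^2=1$, since $\tr[U^*U]=\tr[I]=d^n$ and similarly for $V$. For a fixed $\theta\in[0,2\pi)$, sesquilinearity of the normalized Hilbert--Schmidt inner product gives
\begin{equation*}
    \norm{U-e^{i\theta}V}_2^2 = \norm{U}_2^2 + \norm{V}_2^2 - e^{i\theta}\inner{U,V} - e^{-i\theta}\inner{V,U} = 2 - 2\,\mathrm{Re}\bigl(e^{i\theta}\inner{U,V}\bigr).
\end{equation*}
Here we use $\inner{V,U}=\overline{\inner{U,V}}$ and $|e^{i\theta}|=1$.

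The second step is to minimize over $\theta$, which amounts to maximizing $\mathrm{Re}(e^{i\theta}\inner{U,V})$. This real part is at most $|\inner{U,V}|$ for every $\theta$. Writing $\inner{U,V}=r e^{i\phi}$ with $r=|\inner{U,V}|$, the choice $\theta=-\phi \bmod 2\pi$ attains this bound. When $r=0$, any $\theta$ works. Since the interval $[0,2\pi)$ covers all phases, the minimum is attained and equals $2-2|\inner{U,V}|$, which is the claimed identity.

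There is no real obstacle. The only point needing care is the convention that the inner product is conjugate-linear in the first argument. This determines whether $e^{i\theta}$ or $e^{-i\theta}$ multiplies $\inner{U,V}$, but the choice does not affect the final value, because only the modulus survives the optimization.
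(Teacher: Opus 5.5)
Your proof is correct and follows the paper's own argument: expand $\|U-e^{i\theta}V\|_2^2$ to $2-2\,\mathrm{Re}\bigl(e^{i\theta}\langle U,V\rangle\bigr)$, bound the real part by $|\langle U,V\rangle|$, and choose $\theta$ so that $e^{i\theta}\langle U,V\rangle=|\langle U,V\rangle|$. Your sign convention (conjugate-linear in the first argument) also matches the paper's definition $\langle U,V\rangle=d^{-n}\operatorname{tr}[U^*V]$.
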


\begin{proof} For any $\theta\in[0,2\pi), $
    \begin{equation*}
        \big\|U-e^{i\theta}V\big\|_2^2=2-e^{i\theta}\langle U,V\rangle -\overline{e^{i\theta}\langle U,V\rangle}\geq 2-2|\langle U,V\rangle |.
    \end{equation*}
    Choosing $\theta$ such that $e^{i\theta}\langle U,V\rangle=|\langle U,V\rangle|$ shows the result.
\end{proof}

\subsubsection{Pauli uniformity norms}

Next we define the Pauli uniformity norms and state a number of useful properties, which can be found in \cite[Section~1.6]{bu2025quantum}.
These norms can be thought of as noncommutative generalizations of the Gowers uniformity norms~\cite{Gowers2001}.

\begin{definition}
    For $h\in \Fdnn$ and $U \in L\p{d^n}$, define the \emph{Pauli derivative of $U$ in direction $h$} as
    \begin{equation*}
        \partial_h U := W_h UW_h^*U^*.
    \end{equation*}
\end{definition}

\begin{definition}
    For $U\in L\p{d^n}$, define the \emph{$k$-th Pauli uniformity norm} as 
    \begin{equation*}
        \norm{U}_{\Pnorm^k}:= \left(\E_{h_1,\dots,h_k\in \Fdnn} \f{1}{d^n}\tr\lrb{\partial_{h_k}\dots\partial_{h_1}U}\right)^{\frac{1}{2^k}}
    \end{equation*}
\end{definition}

While unclear from the definition, this notion in fact defines a norm on $L(d^n)$ for all $k\geq 2$.\footnote{This fact was communicated to us by Asgar Jamneshan. The case when $k\in \{2, 3\}$ was previously shown by Bu, Gu, and Jaffe.}
A useful fact about these norms is that
\begin{equation}\label{eq:Qknormisboundedbyone}
    \norm{U}_{\Pnorm^k}\leq 1 \text{ for all }U\in \U\p{d^n},
\end{equation}
since in this case $\partial_{h_1}\dots\partial_{h_k}U$ is unitary for all $h_1,\dots,h_k\in \Fdnn$.
One sees directly from the definition that the $\Pnorm^k$ norms satisfy the following inductive or nesting property:
    \begin{equation}\label{fact:inductive}
        \norm{U}_{\Pnorm^k}^{2^k} = \E_{h\in \Fdnn} \norm{\partial_h U}_{P^{k-1}}^{2^{k-1}}.
    \end{equation}

Finally, it is useful to note the following explicit forms for the $P^1$ and $P^2$ norms:
    for all $U\in L\p{d^n}$, we have
    \begin{equation}
\label{lemma:Q1Q2alternateform}
        \norm{U}_{P^1}^2 = \f{1}{d^{2n}}\abs{ \tr\lrb{U}}^2
        \quad\text{and}\quad
        \norm{U}_{P^2}^4 = \f{1}{d^{2n}}\E_{h\in\Fdnn}\abs{ \tr\lrb{\partial_hU}}^2.
    \end{equation}

\subsubsection{Pauli Polynomials and the Clifford hierarchy}

We now define the extremizers of Inequality \eqref{eq:Qknormisboundedbyone}, which we call \emph{Pauli polynomials}.

\begin{definition}[Pauli Polynomials]
    \label{def:quantum-poly}
    Let $\cP^{(0)}:=\lrs{e^{i\theta}I \mid \theta \in [0,2\pi)}$.
    For $k\geq 1$, the Pauli polynomials of degree at most $k$ are defined by
    \begin{equation*}
        \cP^{(k)}:=\lrs{U\in \U\p{d^n} \mid \partial_hU\in \cP^{(k-1)} \text{ for all }h\in \Fdnn}.
    \end{equation*}
\end{definition}

Note that this definition implies that
\begin{equation}\label{eq:polynomialsextremiseQknorm}
    U\in \cP^{(k-1)}\iff \norm{U}_{\Pnorm^k}=1.
\end{equation}
The set of Pauli polynomials coincides with the \emph{Clifford hierarchy}, which is defined as follows:

\begin{definition}[Clifford hierarchy]\label{def:cliff-hierarchy}
    Denote $\cC^{(1)} = \lrs{\tau^pW_h \mid p\in \Z_{D},h\in \Fdnn}$.
    For $k\geq 2$, the $k$-th level of the Clifford hierarchy $\cC^{(k)}$ is defined as
    \begin{equation*}
        \cC^{(k)} := \lrs{U\in \U\p{d^n}\mid U\cC^{(1)}U^* \subseteq \cC^{(k-1)}}
    \end{equation*}  
\end{definition}

In quantum information theory, the group $\cC^{(1)}$ is known as the Pauli group $\cP_n(d)$, and $\cC^{(2)}$ is known as the Clifford group.
The next result easily follows from the above definitions:

\begin{proposition}
    For $k\ge 2$, $\cP^{(k)}$ is equal to $\cC^{(k)}$.
\end{proposition}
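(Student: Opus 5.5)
We prove $\cP^{(k)}=\cC^{(k)}$ by induction on $k\ge 2$. Both sets are defined recursively: $\cP^{(k)}$ by asking that every Pauli derivative $\partial_hU=W_hUW_h^*U^*$ lies one level down, and $\cC^{(k)}$ by asking that every conjugate $UPU^*$, $P\in\cC^{(1)}$, lies one level down. The link between the two is the identity
\[
U W_h^* U^* = W_h^*\,\partial_h U ,
\]
together with the fact that both hierarchies are stable under multiplication by Paulis (up to phase) in the right sense.

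\textbf{Base case.} We first identify $\cP^{(1)}$ with $\cC^{(1)}$ up to global phases. If $U\in\cP^{(1)}$, then for every $h$ we have $W_hUW_h^*=\lambda_h U$ with $|\lambda_h|=1$. The map $h\mapsto\lambda_h$ is a character of $\Fdnn$, because $W_{h+h'}$ is proportional to $W_hW_{h'}$ (\Cref{fact:weyl-calc}). Every character arises from some Weyl operator: $W_hW_aW_h^*=\omega^{[h,a]}W_a$, and the symplectic form is nondegenerate. So there is $a$ with $W_a^*U$ commuting with all Weyl operators. Since the Weyl operators span $L(d^n)$, Schur's lemma makes $W_a^*U$ a scalar. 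Hence $\cP^{(1)}=\{e^{i\theta}W_a\}$.

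\textbf{Level 2.} For $U\in\cP^{(2)}$, each $\partial_hU$ is a phase times a Weyl operator, so $UW_h^*U^*=W_h^*\partial_hU$ is a phase times a Weyl operator. Because $UW_h^*U^*$ has order dividing $D$ (as $W_h^*$ does), the phase is forced to be a power of $\tau$. This is the only delicate point, and it is handled by the cocycle facts in \Cref{fact:beta}. So $U$ conjugates $\cC^{(1)}$ into $\cC^{(1)}$, i.e.\ $U\in\cC^{(2)}$. Conversely, for $U\in\cC^{(2)}$, $UW_h^*U^*\in\cC^{(1)}$, so $\partial_hU=W_h(UW_h^*U^*)\in\cC^{(1)}\subseteq\cP^{(1)}$.

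\textbf{Inductive step.} Assume $\cP^{(k-1)}=\cC^{(k-1)}$ with $k\ge3$. The key auxiliary fact is that $\cC^{(k-1)}$ is closed under left multiplication by Paulis, up to phases, which we prove by induction. For $P\in\cC^{(1)}$ and $V\in\cC^{(j)}$,
\[
(PV)Q(PV)^*=P\,(VQV^*)\,P^*,
\]
and conjugating by a Pauli preserves each level. That last claim is itself a quick induction, using that the Pauli group is normal in every relevant sense: $PVP^*Q\,PV^*P^*=P\,V(P^*QP)V^*\,P^*$ with $P^*QP\in\cC^{(1)}$.

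With this in hand, take $U\in\cC^{(k)}$. Then $\partial_hU=W_h(UW_h^*U^*)\in\cC^{(k-1)}=\cP^{(k-1)}$, so $U\in\cP^{(k)}$. Conversely, if $U\in\cP^{(k)}$, then $UW_h^*U^*=W_h^*\partial_hU\in\cC^{(k-1)}$. Global phases are harmless, since scalars lie in $\cP^{(0)}$ and the levels $\cC^{(j)}$, $j\ge2$, are closed under them. Thus $U$ conjugates every $\tau^pW_h$ into $\cC^{(k-1)}$, so $U\in\cC^{(k)}$.

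The main obstacle is bookkeeping of phases: $\cC^{(1)}$ only allows phases $\tau^p$, while $\cP^{(0)}$ allows all phases. The resolution is the order argument in the level-2 step, which forces the phase to be a power of $\tau$.
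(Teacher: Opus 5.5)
Your proof is correct and is essentially the argument the paper has in mind; the paper gives no proof and only asserts that the proposition follows easily from the definitions. The identity $UW_h^*U^*=W_h^*\,\partial_hU$, closure of each level under multiplication and conjugation by Paulis, and the phase mismatch between $\cP^{(1)}$ and $\cC^{(1)}$ at level $2$ are exactly the needed ingredients. One small correction: the phase is forced into $\{\tau^p\}$ because every Weyl operator satisfies $W_c^D=I$ (a direct computation from the definition of $W_a$) and $\tau$ has order exactly $D$. This does not follow from the cocycle identities of \Cref{fact:beta} alone, so you should cite the direct computation instead.
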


The set $\cP^{(k)}$ is readily shown to be closed under left and right multiplication with Pauli matrices, as well as taking inverses.

\begin{lemma}
\label{lemma:polynomials-closed-pauli-mult}
    Let $a,b\in \Fdnn$. For $k\geq 1$, if $U\in \cP^{(k)}$, then $W_aUW_b\in \cP^{(k)}$ and $U^*\in \cP^{(k)}$.
\end{lemma}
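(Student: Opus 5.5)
We argue by induction on $k$, using only Definition~\ref{def:quantum-poly} and the multiplication rules of Proposition~\ref{fact:weyl-calc}. The key observation is that a Pauli derivative of $W_aUW_b$ can be rewritten as a conjugate of a Pauli derivative of $U$, up to a phase. Similarly, a Pauli derivative of $U^*$ can be rewritten as a conjugate of (the inverse of) a Pauli derivative of $U$. So we prove the slightly stronger statement that for every $k\ge 0$, the set $\cP^{(k)}$ is closed under

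\begin{itemize}
\item conjugation $V\mapsto W_cVW_c^*$,
\item multiplication by phases $e^{i\theta}$,
\item left and right multiplication by Weyl operators,
\item inversion.
\end{itemize}

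For $k=0$ all four are immediate, since $\cP^{(0)}$ consists of scalar multiples of the identity, which commute with everything and are closed under inverses. One caveat at the base: left multiplication by a Weyl operator does not preserve $\cP^{(0)}$. We therefore state the Weyl-multiplication claim only for $k\ge1$. For $k=1$ we note that $\cP^{(1)}$ consists of the unitaries whose every derivative is a scalar, and $W_a$ itself lies in $\cP^{(1)}$ by the commutation relation.

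For the inductive step, fix $U\in\cP^{(k)}$ and $h\in\Fdnn$. Using the commutation relation $W_hW_a=\omega^{[h,a]}W_aW_h$ (and the analogous one for $b$), compute
\[
\partial_h(W_aUW_b)=W_hW_aUW_bW_h^*W_b^*U^*W_a^*=\omega^{[h,a]}\,\omega^{-[h,b]}\,W_a\,(\partial_hU)\,W_a^*,
\]
or a similar expression with the phase determined by the symplectic form. By hypothesis $\partial_hU\in\cP^{(k-1)}$, so this is a phase times a Pauli conjugate of an element of $\cP^{(k-1)}$. It lies in $\cP^{(k-1)}$ by the inductive closure under phases and conjugation, hence $W_aUW_b\in\cP^{(k)}$.

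For the inverse, compute
\[
\partial_h(U^*)=W_hU^*W_h^*U=U^*\bigl(UW_hU^*W_h^*\bigr)U=U^*(\partial_hU)^*\,U,
\]
and note that $(\partial_h U)^* = W_h\,\partial_{-h}'(\cdots)$ is awkward. Instead we write $U^*(\partial_hU)^*U = U^*W_h\,(W_h^*(\partial_hU)^*W_h)\,W_h^*U$. This is the step I expect to be the main obstacle: conjugation by $U^*$ is not among our closure operations. The natural fix is to observe that
\[
W_h^*\,\partial_h(U^*)\,W_h=W_h^*\,\bigl(W_hU^*W_h^*U\bigr)\,W_h=U^*W_h^*UW_h=\bigl(W_h^*UW_hU^*\bigr)^*
\]
up to the relation $W_h^*=$ phase$\cdot W_{-h}$. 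By Proposition~\ref{fact:weyl-calc} with $\beta$, the right-hand side equals a phase times $(\partial_{-h}U)^*$. By hypothesis $\partial_{-h}U\in\cP^{(k-1)}$, and by induction so is its inverse and any phase multiple of it. Undoing the conjugation by $W_h$ (closure under Pauli conjugation) gives $\partial_h(U^*)\in\cP^{(k-1)}$.

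Finally, closure of $\cP^{(k)}$ under phases and conjugation follows from the Weyl-multiplication case, taking $b=-a$ up to phase, together with the observation that $\partial_h(e^{i\theta}U)=\partial_hU$. This completes the induction.
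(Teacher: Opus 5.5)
\paragraph{The Weyl-multiplication half is fine.} You correctly get $\partial_h(W_aUW_b)=\omega^{[h,a]+[h,b]}\,W_a(\partial_hU)W_a^*$; your sign on the $b$-phase is off, which is harmless. Closure of $\cP^{(k-1)}$ under phases and Pauli conjugation then gives $W_aUW_b\in\cP^{(k)}$. The paper gives no proof of this lemma; it only says it is readily shown. This half is also the only part the paper later uses, to get $VP^*\in\cP^{(k-1)}$.

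\paragraph{The inversion half has a genuine gap.} The identity $U^*W_h^*UW_h=(W_h^*UW_hU^*)^*$ is false. Taking the adjoint reverses the order and swaps $U$ with $U^*$, so $(W_h^*UW_hU^*)^*=UW_h^*U^*W_h$. What your computation actually gives is $W_h^*\,\partial_h(U^*)\,W_h=U^*W_h^*UW_h=U^*(\partial_{-h}U)U$. The conjugation by $U^*$ you flagged as the obstacle is therefore still there.

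This is not a slip that rearrangement can fix. For $k\ge 2$, closure of $\cP^{(k-1)}$ under Weyl multiplication shows:
\begin{itemize}
\item $U\in\cP^{(k)}$ iff $UW_hU^*\in\cP^{(k-1)}$ for all $h$;
\item $U^*\in\cP^{(k)}$ iff $U^*W_hU\in\cP^{(k-1)}$ for all $h$.
\end{itemize}
So the inverse claim says exactly that the first condition implies the second. Phases, Pauli multiplication, Pauli conjugation and inversion cannot turn one into the other. For instance, the set of phase multiples of Weyl operators and of all $W_a(UW_gU^*)^{\pm1}W_b$ is closed under these operations and contains every $UW_hU^*$. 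It need not contain the $U^*W_hU$. So the inductive package you carry is too weak to prove inversion.

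\paragraph{What can be repaired.} For $k=1$ the claim is immediate. For $k=2$ a finiteness argument works. $\cP^{(1)}$ consists of phases times Weyl operators. Conjugation by $U$ maps it into itself injectively modulo phases, hence onto. Therefore every $U^*W_aU$ lies in $\cP^{(1)}$.

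For $k\ge 3$ you need genuine structural input about $\cP^{(k-1)}$. Already for $k=3$ the claim says: if $U\cC^{(1)}U^*$ normalizes $\cC^{(1)}$ modulo phases, then $\cC^{(1)}$ normalizes $U\cC^{(1)}U^*$ modulo phases. Nothing in your argument addresses this.
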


Our main goal in this paper is to establish a relationship between the Clifford fidelity, defined in \Cref{eq:fidelity}, and the $P^k$ norm, in particular in the case where the order-$k$ Pauli uniformity norm is close to $1$ (the 99\% regime).
One direction is not too difficult to prove:

\begin{proposition}[Direct inequality, Theorem 8 of \cite{bu2025quantum}]
\label{prop:directinequality}
    For all $U\in \U\p{d^n}$ and $k\ge 1$,
    \begin{equation*}
        \cF_{\cC^{(k-1)}}\p{U} \le \norm{U}_{\Pnorm^k}.
    \end{equation*}
\end{proposition}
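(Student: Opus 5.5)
We argue by induction on $k$, via the nesting property \eqref{fact:inductive}. The claim to prove is the direct inequality: for every $V\in\cC^{(k-1)}=\cP^{(k-1)}$ (for $k\ge 3$), or $V\in\cP^{(k-1)}$ in general, we have $\abs{\inner{V,U}}^2\le \norm{U}_{P^k}$. The natural route is to prove a Gowers--Cauchy--Schwarz-type inequality. Specifically, for $U$ unitary and $V\in\cP^{(k-1)}$ we aim to show
\begin{equation*}
\abs{\inner{V,U}}^{2^k}\le \norm{U}_{P^k}^{2^k},
\end{equation*}
which is stronger than needed, since $\abs{\inner{V,U}}\le 1$ gives $\abs{\inner{V,U}}^2\ge\abs{\inner{V,U}}^{2^k}$ only in the wrong direction. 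So we instead target exactly $\abs{\inner{V,U}}^{2}\le\norm{U}_{P^k}$, i.e.\ $\abs{\inner{V,U}}^{2^{k}}\le \norm{U}_{P^k}^{2^{k-1}}$, which follows from $\abs{\inner{V,U}}^{2^{k-1}}\le\norm{U}_{P^k}^{2^{k-1}}$ squared. The base case $k=1$ is trivial: $V=e^{i\theta}I$ and $\abs{\inner{V,U}}^2=\norm{U}_{P^1}^2\le\norm{U}_{P^1}$.

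For the inductive step, write $\abs{\inner{V,U}}^2$ via the Pauli expansion. Expanding $V^*U$ in the Weyl basis gives $\E_h \frac{1}{d^n}\tr(W_h V^*U W_h^* U^* V)=\abs{\inner{V,U}}^2\cdot(\text{something})$. More concretely, we use the twirling identity
\begin{equation*}
\E_{h}\, W_h A W_h^* = \tfrac{1}{d^n}\tr(A)\, I,
\end{equation*}
which gives $\abs{\inner{V,U}}^2=\E_h\inner{W_hV^*UW_h^*,\,V^*U}$. Rewriting this in terms of derivatives, we get $\frac{1}{d^n}\tr(W_hV^*UW_h^*U^*V)=\frac1{d^n}\tr\big((\partial_h V)^*\,\partial_h U\big)$ up to conjugation. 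Here $\partial_hV\in\cP^{(k-2)}$ by definition. Hence
\begin{equation*}
\abs{\inner{V,U}}^2 = \E_h\inner{\partial_hV,\partial_hU}.
\end{equation*}
Applying the induction hypothesis to each unitary $\partial_h U$ against $\partial_hV\in\cP^{(k-2)}$ gives $\abs{\inner{\partial_hV,\partial_hU}}\le \norm{\partial_hU}_{P^{k-1}}^{1/2}$. Then, by Jensen/H\"older together with \eqref{fact:inductive},
\begin{equation*}
\abs{\inner{V,U}}^2\le \E_h\norm{\partial_hU}_{P^{k-1}}^{1/2}\le\Big(\E_h\norm{\partial_hU}_{P^{k-1}}^{2^{k-1}}\Big)^{2^{-k}}=\norm{U}_{P^k}^{1/2}.
\end{equation*}
This is weaker than desired. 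To get the full power, the induction should instead carry the stronger statement $\abs{\inner{V,U}}^2\le\norm{U}_{P^k}^{?}$ with the exponent tuned. The cleanest choice is to use the squared quantity $\abs{\inner{\partial_hV,\partial_hU}}^2\le\norm{\partial_hU}_{P^{k-1}}$ on a Cauchy--Schwarz step: $\abs{\inner{V,U}}^4\le\E_h\abs{\inner{\partial_hV,\partial_hU}}^2\le\E_h\norm{\partial_hU}_{P^{k-1}}\le\norm{U}_{P^k}^{2}$, where the last step again uses power-mean with \eqref{fact:inductive}. This yields exactly $\abs{\inner{V,U}}^2\le\norm{U}_{P^k}$.

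The main obstacle is verifying the identity $\abs{\inner{V,U}}^2=\E_h\inner{\partial_hV,\partial_hU}$ precisely. This requires getting the ordering and adjoints right, since $\partial_hV^*\partial_hU=VW_hV^*W_h^*\cdot W_hUW_h^*U^*$ traces to $\tr(W_h^*V^*\,\cdot\,)$-type expressions. It is also possible that a conjugated form, using $U^*$, $V^*$ and closure under inverses from \Cref{lemma:polynomials-closed-pauli-mult}, is what actually appears. The remaining steps are routine averaging.
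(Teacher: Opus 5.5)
\paragraph{Assessment.} Your induction is correct. It differs from the paper mainly in that the paper gives no argument at all: it imports this inequality from Bu--Gu--Jaffe, and it attributes the sharper bound $\cF_{\cC^{(k-1)}}(U)\le\|U\|_{P^k}^2$ to the Gowers--Cauchy--Schwarz arguments of Jamneshan--Thom. Your route is elementary and self-contained.

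\paragraph{The identity you flag.} The step you call the main obstacle is an exact one-line identity. By cyclicity of the trace, with $A=V^*U$,
\[
\langle\partial_hV,\partial_hU\rangle = d^{-n}\tr\bigl(W_hAW_h^*A^*\bigr) = d^{-n}\tr(\partial_hA).
\]
The twirl $\E_hW_hAW_h^* = d^{-n}\tr(A)\, I$ then turns the average over $h$ into $\lvert d^{-n}\tr A\rvert^2 = \lvert\langle V,U\rangle\rvert^2$. In other words, the identity is just $\|V^*U\|_{P^1}^2=\lvert\langle V,U\rangle\rvert^2$, and there is no conjugation issue. You should also state explicitly that $\cC^{(k-1)}\subseteq\cP^{(k-1)}$, so that $\partial_hV\in\cP^{(k-2)}$. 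For $k\ge3$ this is the paper's proposition. For Pauli operators it holds because $\partial_h(\tau^pW_a)=\omega^{[h,a]}I$, and for $k=1$ you read $\cC^{(0)}$ as $\cP^{(0)}$.

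\paragraph{Two slips, and what fixing them buys.} First, your ``first attempt'' was not weaker than needed. The power-mean step gives
\[
\E_h\|\partial_hU\|_{P^{k-1}}^{1/2}\le\bigl(\E_h\|\partial_hU\|_{P^{k-1}}^{2^{k-1}}\bigr)^{2^{-k}}=\bigl(\|U\|_{P^k}^{2^k}\bigr)^{2^{-k}}=\|U\|_{P^k},
\]
not $\|U\|_{P^k}^{1/2}$. So that attempt already proves the proposition, and the Cauchy--Schwarz detour (also valid) is unnecessary.

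Second, contrary to your opening paragraph, $\lvert\langle V,U\rangle\rvert\le\|U\|_{P^k}$ does imply the claim, since $\|U\|_{P^k}\le 1$. Moreover, the same induction proves it. Take $\lvert\langle V,U\rangle\rvert\le\|U\|_{P^k}$ for all unitary $U$ and all $V\in\cP^{(k-1)}$ as the inductive hypothesis; the base case $k=1$ holds with equality. Replacing Cauchy--Schwarz by the triangle inequality, you get
\[
\lvert\langle V,U\rangle\rvert^2\le\E_h\lvert\langle\partial_hV,\partial_hU\rangle\rvert\le\E_h\|\partial_hU\|_{P^{k-1}}\le\|U\|_{P^k}^2 .
\]
So your approach, carried out carefully, directly recovers the improved bound $\cF_{\cC^{(k-1)}}(U)\le\|U\|_{P^k}^2$, which the paper obtains only by appeal to heavier machinery.
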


This bound can in fact be improved to $\cF_{\cC^{(k-1)}}\p{U} \le \norm{U}_{\Pnorm^k}^2$ by using the Gowers--Cauchy--Schwarz arguments in \cite[Appendix~A]{JAMNESHAN_THOM_2026}.
The other direction is the subject of a conjecture of Bu, Gu, and Jaffe~\cite[Conjecture~9]{bu2025quantum}.

\begin{conjecture}[Inverse Conjecture for the Pauli uniformity norms]
\label{conj:from-bu25}
For any $\eps> 0$, integer $k\geq 2$ and prime number~$d$, there exists a $\delta = \delta(\eps,k,d)>0$ such that the following holds.
For any positive integer~$n$, if a unitary $U\in \U(d^n)$ satisfies $\norm{U}_{\Pnorm^k} \geq \eps$, then $\cF_{\cC^{(k-1)}}\p{U} \geq \delta$.
\end{conjecture}

In this paper, we prove this conjecture for $k= 4$ when $\eps$ is sufficiently close to~1 (\Cref{thm:intro-main}).

\section{Warm-up: an inverse theorem for $k=2$}\label{sec:warmup}

As a warm-up, we prove the first nontrivial case of \Cref{conj:from-bu25}, corresponding to $k=2$.
This case is significantly simpler than that of $k \geq 3$, and follows from a Fourier-analytic argument that is quite standard in the commutative setting (see e.g. \cite{green2007montreallecturenotesquadratic}).

The Fourier expansion of a matrix in $L(d^n)$ is defined in terms of the Weyl operators, which form an orthonormal basis of $L(d^n)$.

\begin{definition}[Fourier expansion of matrices]
    For any $U\in L\p{d^n}$, we can decompose
    \begin{equation*}
        U = \sum_{a\in \Fdnn} \hU(a) W_a,
    \end{equation*}
    where $\hU(a) := \inner{W_a, U}$.
\end{definition}

By orthonormality of the Weyl basis, we have that $\sum_{a\in \Fdnn} \big|\hU(a)\big|^2 = \inner{U,U}$, which equals 1 if~$U$ is a unitary.

\begin{lemma}[The $P^2$-norm equals the $L^4$-norm of the Fourier transform]
For $U\in L\p{d^n}$,
\begin{equation*}
    \norm{U}_{\Pnorm^2}^4=\sum_{a\in \Fdnn}|\widehat{U}(a)|^4
\end{equation*}
\end{lemma}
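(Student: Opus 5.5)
We start from the explicit form \eqref{lemma:Q1Q2alternateform}, $\norm{U}_{P^2}^4 = \E_{h}\abs{\tr[\partial_h U]}^2/d^{2n}$, and compute $d^{-n}\tr[\partial_h U] = d^{-n}\tr[W_h U W_h^* U^*] = \inner{W_h U^* W_h^*\,{}^{*}\!,\,\cdot}$-type quantities via the Fourier expansion. Concretely, we write $U=\sum_a \hU(a)W_a$ and $U^*=\sum_b \overline{\hU(b)}W_b^*$. By \Cref{fact:weyl-calc}, $W_hW_aW_h^* = \omega^{[h,a]}W_a$. Therefore $W_hUW_h^* = \sum_a \omega^{[h,a]}\hU(a)W_a$.

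Next, using orthonormality of the Weyl basis, $d^{-n}\tr[W_aW_b^*]=\inner{W_b,W_a}=\mathbbm{1}_{a=b}$. This gives
\begin{equation*}
\frac{1}{d^n}\tr[\partial_h U] = \sum_{a}\omega^{[h,a]}\abs{\hU(a)}^2 .
\end{equation*}
The relation $\tr[W_aW_b^*]=d^n\mathbbm{1}_{a=b}$ is exactly the orthonormality already stated, so no cocycle bookkeeping with $\beta$ is needed here. The phase conventions $\tau$ cancel because $W_b^*$ pairs with $W_b$.

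We then expand the square and average over $h$:
\begin{equation*}
\E_h\Bigl|\sum_a \omega^{[h,a]}\abs{\hU(a)}^2\Bigr|^2=\sum_{a,b}\abs{\hU(a)}^2\abs{\hU(b)}^2\,\E_h\omega^{[h,a-b]} .
\end{equation*}
The symplectic form is nondegenerate, so $h\mapsto \omega^{[h,c]}$ is a nontrivial character of $\Fdnn$ whenever $c\neq 0$. Hence $\E_h\omega^{[h,c]}=\mathbbm{1}_{c=0}$, and only the diagonal $a=b$ survives, yielding $\sum_a\abs{\hU(a)}^4$.

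The only point needing care is the nondegeneracy of $[\cdot,\cdot]$ on $\Fdnn$. This is immediate: if $[(u,v),(u',v')]=0$ for all $(u',v')$, then taking $u'=0$ forces $u=0$, and taking $v'=0$ forces $v=0$. Everything else is routine bookkeeping.
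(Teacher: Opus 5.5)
Your proof is correct and follows the paper's argument essentially step for step. You expand $U$ in the Weyl basis inside the explicit $P^2$ formula, use the commutation relation and orthonormality to get $d^{-n}\tr[\partial_h U]=\sum_a \omega^{[h,a]}|\hU(a)|^2$ (your $\omega^{[h,a]}$ agrees with the paper's $\omega^{-[a,h]}$ by antisymmetry), and then kill the off-diagonal terms after expanding the square by nondegeneracy of the symplectic form. The only blemish is the garbled inner-product expression in your first paragraph; it plays no role and can be deleted.
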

\begin{proof}
By \Cref{lemma:Q1Q2alternateform},
\begin{equation*}
    \norm{U}_{\Pnorm^2}^4=\E_{h\in\Fdnn}\left|\frac{1}{d^n}\tr[W_hUW_h^*U^*]\right|^2.
\end{equation*}
Using the Fourier expansion of $U$, this is equal to
\begin{equation*}
    \E_{h\in\Fdnn}\Bigg|\sum_{a_1,a_2\in \Fdnn}\widehat{U}(a_1)\overline{\widehat{U}(a_2)}\frac{1}{d^n}\tr[W_hW_{a_1}W_h^*W_{a_2}^*]\Bigg|^2=\E_{h\in\Fdnn}\Bigg|\sum_{a\in \Fdnn}\abs{\widehat{U}(a)}^2\omega^{-[a,h]}\Bigg|^2,
\end{equation*}
where we have used the commutation relations and orthonormality of the Weyl operators. Expanding the outer square, we obtain
\begin{equation*}
    \sum_{a_1,a_2\in \Fdnn}\abs{\widehat{U}(a_1)}^2\abs{\widehat{U}(a_2)}^2\E_{h\in\Fdnn}\omega^{[a_1-a_2,h]}=\sum_{a\in \Fdnn}\abs{\widehat{U}(a)}^4,
\end{equation*}
where the equality follows because $[\cdot,\cdot]$ is a non-degenerate bilinear form.
\end{proof}

\begin{proposition}
[Inverse theorem for $P^2$]\label{prop:Q2inverse}
    For all $U\in \U\p{d^n}$, we have
    \begin{equation*}
        \cF_{\cC^{(1)}}\p{U} \geq \norm{U}_{\Pnorm^2}^4.
    \end{equation*}
\end{proposition}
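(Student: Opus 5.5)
The plan is to combine the preceding lemma, which identifies $\norm{U}_{\Pnorm^2}^4$ with $\sum_a |\hU(a)|^4$, with the elementary inequality that an $\ell^4$ norm is controlled by the $\ell^\infty$ and $\ell^2$ norms. Since every $W_a$ lies in $\cC^{(1)}$, and the phase $\tau^p$ does not affect $\abs{\inner{V,U}}$, we have
\begin{equation*}
    \cF_{\cC^{(1)}}(U) = \max_{p,a}\abs{\inner{\tau^pW_a,U}}^2 = \max_{a\in\Fdnn}\abs{\hU(a)}^2 .
\end{equation*}
So the statement reduces to showing $\max_a |\hU(a)|^2 \ge \sum_a |\hU(a)|^4$.

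First I will record that, because $U$ is unitary, Parseval (orthonormality of the Weyl basis, noted just after the definition of the Fourier expansion) gives $\sum_a |\hU(a)|^2 = \inner{U,U} = 1$. Next I will bound
\begin{equation*}
    \sum_{a} |\hU(a)|^4 \le \Big(\max_{a}|\hU(a)|^2\Big)\sum_a |\hU(a)|^2 = \max_a |\hU(a)|^2 .
\end{equation*}
Combining this with the lemma and the identity above gives $\norm{U}_{\Pnorm^2}^4 \le \cF_{\cC^{(1)}}(U)$, as claimed.

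There is no real obstacle here. The only point needing care is the identification of $\cF_{\cC^{(1)}}(U)$ with the largest squared Fourier coefficient. This uses that $\cC^{(1)}$ consists exactly of the scalar multiples $\tau^pW_h$, so the maximum in \eqref{eq:fidelity} is attained over Weyl operators up to a global phase. The maximum exists because $\Fdnn$ is finite.
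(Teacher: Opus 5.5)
Your proposal is correct and is essentially the paper's proof: it combines the identity $\norm{U}_{\Pnorm^2}^4=\sum_a|\hU(a)|^4$ with the bound $\sum_a|\hU(a)|^4\le \max_a|\hU(a)|^2\sum_a|\hU(a)|^2$ and Parseval. You also spell out the identification $\cF_{\cC^{(1)}}(U)=\max_a|\hU(a)|^2$, which the paper uses only implicitly.
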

\begin{proof}
    \begin{equation*}
        \norm{U}_{\Pnorm^2}^4=\sum_{a\in \Fdnn}\abs{\widehat{U}(a)}^4 \le \cF_{\cC^{(1)}}\p{U} \sum_{a\in \Fdnn}\abs{\widehat{U}(a)}^2 
            = \cF_{\cC^{(1)}}\p{U}.
    \end{equation*}
\end{proof}
Note that this inverse theorem is meaningful in both the 99\% setting and the 1\% setting.
Combining it with \Cref{prop:directinequality}, we see that
\begin{equation*}
    \|U\|_{P^2}^4 \leq \cF_{\cC^{(1)}}(U) \leq \|U\|_{P^2},
\end{equation*}
so $\cF_{\cC^{(1)}}(U)$ is polynomially related to $\|U\|_{P^2}$ for unitaries.

\section{A 99\% inverse theorem for $1\le k\le 4$}
\label{sec:99inverse}

In this section we prove our main technical result, \Cref{thm:intro-main}, which gives a 99\% inverse theorem for the Pauli uniformity norm of order $4$.
We begin by showing that Pauli polynomials are ``well-separated'', a fact that will be needed in our proof.

\begin{lemma}[Separation Lemma]\label{lemma:separation}
Let~$k\geq 1$ and let $U\in \cP^{(k)}$ be such that $\norm{U - I}_2\le \delta < 2^{-k+3/2}$.
Then, $U = e^{i\theta}I$ for some $\theta\in [-2\delta, 2\delta]$.
\end{lemma}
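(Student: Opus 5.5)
We prove the statement by induction on $k$, using the derivative characterization of $\cP^{(k)}$ in \Cref{def:quantum-poly}. The key observation is that Pauli derivatives of a matrix close to the identity are themselves close to the identity. If $\norm{U-I}_2\le\delta$, then for every $h\in\Fdnn$,
\begin{equation*}
\partial_hU - I = W_h U W_h^* U^* - I = W_h(U-I)W_h^*U^* + (U^*-I).
\end{equation*}
Unitary invariance of the Frobenius norm, together with $\norm{U^*-I}_2=\norm{U-I}_2$, then gives $\norm{\partial_h U - I}_2\le 2\delta$. Since $\partial_hU\in\cP^{(k-1)}$ and $2\delta<2^{-(k-1)+3/2}$, the inductive hypothesis applies to each derivative. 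We conclude that $\partial_h U = e^{i\theta_h}I$ is a scalar for every $h$.

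\textbf{Base case ($k=1$).} The recursion as stated bottoms out at $\cP^{(0)}$, the scalars, so we either treat $k=1$ directly or start the induction at $k=0$ with the convention that the claim is trivial there. For $k=1$, every $\partial_hU$ is a scalar, so $W_hUW_h^*=c_hU$ for all $h$. Expanding $U$ in the Weyl basis, this relation together with the commutation relations of \Cref{fact:weyl-calc} forces $U$ to be supported on a single Weyl operator, i.e.\ $U=\lambda W_a$ with $|\lambda|=1$. Orthonormality gives $\norm{\lambda W_a-I}_2^2 = 2$ if $a\neq 0$. Since $\delta<2^{1/2}$, we get $a=0$, so $U=\lambda I$.

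\textbf{Inductive step.} This argument works for general $k$: once every $\partial_hU$ is scalar, $U$ lies in $\cP^{(1)}$. By the base case, $U=\lambda W_a$, and $\norm{U-I}_2\le\delta<\sqrt2$ forces $a=0$. The hypothesis $\delta<2^{-k+3/2}$ is exactly what allows the bound to double at each of the $k-1$ steps down to level one while staying below $\sqrt2$. In fact we only need to reduce to $\cP^{(1)}$, so one fewer step would suffice.

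\textbf{Angle bound.} From $U=e^{i\theta}I$ with $\theta\in(-\pi,\pi]$ we have $\norm{U-I}_2=|e^{i\theta}-1| = 2|\sin(\theta/2)|\le\delta$. Since $\delta<\sqrt2$, we have $|\theta|\le\pi/2$. On that range $2|\sin(\theta/2)|\ge |\theta|\cdot\frac{2\sqrt2}{\pi}\ge|\theta|/2$, which gives $|\theta|\le 2\delta$.

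\textbf{Main obstacle.} The only nontrivial point is the claim that $W_hUW_h^*\propto U$ for all $h$ implies $U$ is a multiple of a single Weyl operator. I would prove it via Fourier coefficients. Conjugation gives $W_hW_aW_h^*=\omega^{[h,a]}W_a$, so $\hU(a)\omega^{[h,a]}=c_h\hU(a)$ for all $a$ and $h$. If two coefficients $a\ne a'$ were both nonzero, this would force $\omega^{[h,a-a']}=1$ for all $h$. Nondegeneracy of the symplectic form then gives $a=a'$, a contradiction.
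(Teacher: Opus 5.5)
Your proposal is correct and follows essentially the same route as the paper. Both arguments use induction on $k$, with each Pauli derivative at most doubling the distance to $I$, reduce to $\cP^{(1)}$ where any non-scalar element is at distance exactly $\sqrt{2}$ from $I$, and control the phase with an elementary lower bound on $|e^{i\theta}-1|$ in terms of $|\theta|$. Your Fourier argument, showing that $W_hUW_h^*\propto U$ for all $h$ forces $U$ to be a phase times a single Weyl operator, makes explicit a step the paper uses without proof in its base case.
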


\begin{proof}
    We first use induction on $k$ to show that $U=e^{i\theta} I$ for some $\theta\in \R$.
    When $k=1$, if $U\in \cP^{(1)}$ is not of the form $e^{i\theta} I$ for some $\theta$, then $\inner{U,I}=0$ and $\norm{U-I}_2= \sqrt{2}$.
    This establishes the base case.

    Now let $k>1$ and suppose the claim has already been proven for $k-1$.
    Let $U\in \cP^{(k)}$ satisfy $\norm{U - I}_2 < 2^{-k+3/2}$.
    By unitary invariance of the Frobenius norm, for any $h\in \Fdnn$, we have that
    \begin{equation*}
        \norm{W_hUW_h^*-I}_2 < 2^{-k+3/2}.
    \end{equation*}
    By unitary invariance followed by the triangle inequality, we conclude that
    \begin{equation}\label{eq:separation-close}
        \norm{\partial_hU-I}_2 = \norm{W_hUW_h^*-U}_2 < 2\cdot 2^{-k+3/2} = 2^{-(k-1)+3/2}.
    \end{equation}
    Since $\partial_h U\in \cP^{(k-1)}$, from the induction hypothesis we see that $\partial_hU= e^{i\theta_h}I \in \cP^{(0)}$ for all $h\in \Fdnn$, and thus $U\in \cP^{(1)}$.
    If $U$ were not of the form $e^{i\theta} I$, then, as in the base case, we would have $\norm{U-I}_2 = \sqrt{2} > 2^{-k+3/2}$.
    We conclude that $U = e^{i\theta}I$ for some $\theta\in \R$, and so by induction the claim holds for all $k\ge 1$.

    Finally, we bound the value of the phase $\theta$.
    We may assume that $U = e^{i\theta}I$ for some $\theta \in [\pi, \pi)$, and suppose $\norm{U - I}_2 \leq \delta$.
    Note that $\norm{U - I}_2 = \abs{e^{i\theta}-1}$, and that $\abs{e^{i\theta}-1} \geq |\theta|/2$ whenever $\theta \in [\pi, \pi)$.
    We conclude that $|\theta| \leq 2\delta$, as wished.
\end{proof}

The proof of our main result (restated in a slightly different way below) follows along the lines of the arguments given by Eisner and Tao \cite[Theorem~1.1]{eisnertao2012}, adapted to our setting.

\begin{theorem}\label{thm:main}
    For all $1\le k\le 4$ and prime number $d$, there exist constants $0 < \vep_k \leq 1$ and $C_k\geq 1$ such that the following holds.
    For all $n\geq 1$, all $0<\vep\le \vep_k$ and all matrices $U\in \U\p{d^n}$, if $\norm{U}_{\Pnorm^k}^{2^k} \ge 1-\vep$, 
    then $\cF_{\cC^{(k-1)}}\p{U} \ge 1-C_k\vep$.
\end{theorem}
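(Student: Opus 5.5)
We argue by induction on $k$, following Eisner and Tao. The case $k=1$ is trivial: $\norm{U}_{P^1}^2=\abs{\inner{I,U}}^2$ and $\cC^{(0)}=\cP^{(0)}$, so $\cF_{\cC^{(0)}}(U)=\norm{U}_{P^1}^2$. The case $k=2$ follows from \Cref{prop:Q2inverse}, since $\norm{U}_{P^2}^4\ge 1-\vep$ gives $\cF_{\cC^{(1)}}(U)\ge 1-\vep$. For $k\in\{3,4\}$ we assume the statement for $k-1$.

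\emph{Step 1 (derivatives are near polynomials).} By \eqref{fact:inductive}, $\E_h (1-\norm{\partial_hU}_{P^{k-1}}^{2^{k-1}})\le\vep$. By Markov, the set $A$ of $h$ with $\norm{\partial_hU}^{2^{k-1}}_{P^{k-1}}\ge 1-\vep_{k-1}$ has density at least $1-\vep/\vep_{k-1}$. For $h\in A$, the induction hypothesis together with \Cref{fact:relation-2normdistance-innerproduct} gives $V_h\in\cP^{(k-2)}$ with
\[
\norm{\partial_hU - V_h}_2^2\le 2C_{k-1}\eta_h,\qquad \eta_h:=1-\norm{\partial_hU}_{P^{k-1}}^{2^{k-1}},
\]
where the phase has been absorbed into $V_h$. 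Note that $\E_h\eta_h\le\vep$.

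\emph{Step 2 (approximate cocycle, then rigidify).} The derivative satisfies the exact cocycle identity
\[
\partial_{h+g}U=\tau^{\beta(h,g)}\,W_h(\partial_gU)W_h^*\,\partial_hU .
\]
Hence $V_{h+g}$ is $O(\sqrt{\eta_h}+\sqrt{\eta_g}+\sqrt{\eta_{h+g}})$-close to $\tau^{\beta}W_hV_gW_h^*V_h$. We want to apply \Cref{lemma:separation} to $V_{h+g}^*\,\tau^\beta W_hV_gW_h^*V_h$, so this product must lie in some fixed level $\cP^{(m)}$. This is where $k\le4$ is used. For $k=3$, the $V$'s are Paulis, and Paulis form a group, so we conclude $V_{h+g}=e^{i\phi}\tau^\beta W_hV_gW_h^*V_h$ exactly (up to phase) whenever the error is below the separation threshold. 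For $k=4$, the $V$'s are Cliffords, and $\cC^{(2)}$ is a group closed under Pauli conjugation (\Cref{lemma:polynomials-closed-pauli-mult}), so the same conclusion holds. For $k\ge5$, $\cC^{(3)}$ is not closed under products, and this argument breaks down.

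Thus for most pairs $(h,g)$ (all but a $O(\vep)$-weighted fraction) the $V_h$ satisfy the cocycle relation exactly up to scalar phases. Writing $V_h=\tilde V_h$ modulo phases, we obtain a ``99\% exact'' cocycle on a dense subset.

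\emph{Step 3 (correct to an exact cocycle and integrate).} First use a majority/self-correction argument: define $\tilde V_h$ as the most popular value of $\tau^\beta W_{g}V_{h-g}W_g^*V_g$ over $g$, up to phase. This value is well defined for every $h$, and it agrees with $V_h$ for most $h$. A standard argument (as in BLR) then shows that $\tilde V$ satisfies the cocycle identity exactly modulo phases. Next, an exact cocycle $h\mapsto\tilde V_h\in\cC^{(k-2)}$ modulo phases is a projective action. The cocycle identity gives the conjugation map $h\mapsto\tilde V_h W_h$, which defines a map $W_h\mapsto \tilde V_hW_h$, modulo phase, that respects the Heisenberg multiplication. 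By Weyl/Stone–von Neumann uniqueness of the irreducible representation of $H(\F_d^{2n})$, this map is implemented by conjugation by some unitary $V$, namely $VW_hV^*\propto\tilde V_hW_h$. Then $\partial_hV\propto\tilde V_h\in\cC^{(k-2)}$, so $V\in\cP^{(k-1)}=\cC^{(k-1)}$. One must check that the phases assemble into a genuine representation; the phase cocycle is a 2-cocycle on $\F_d^{2n}$ into $U(1)$ that is close to trivial, and it can be trivialized by averaging.

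\emph{Step 4 (closeness).} It remains to bound $\abs{\inner{V,U}}$. Set $M:=V^*U$. Then $\partial_h(V^*U)$ relates to $\tilde V_h^{*}$ times $\partial_hU$ after conjugation, so for most $h$ we get $W_hMW_h^*\approx e^{i\psi_h}M$ with error $O(\sqrt{\eta_h})$ on average. In terms of $P^2$, this says $\norm{M}_{P^2}^4\ge 1-O(\vep)$, so \Cref{prop:Q2inverse} yields a Pauli $W_a$ with $\abs{\inner{W_a,M}}^2\ge1-O(\vep)$. Then $VW_a\in\cC^{(k-1)}$ by \Cref{lemma:polynomials-closed-pauli-mult}, and it has fidelity at least $1-C_k\vep$ with $U$. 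Keeping the errors linear in $\vep$ requires working with squared Frobenius errors throughout, rather than their square roots.

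The main obstacle is Step 3: making the 99\% cocycle exact with linear loss, and handling the phases. The separation lemma only controls phases to within $O(\delta)$, so the phase 2-cocycle requires a careful averaging argument to stay linear in $\vep$.
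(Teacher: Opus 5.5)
Your outline is correct and is essentially the paper's Eisner--Tao argument. Both approximate the derivatives by elements of the group $\cP^{(k-2)}$, rigidify with \Cref{lemma:separation}, trivialize the small phase cocycle by averaging, use uniqueness of the Heisenberg representation to produce $V\in\cP^{(k-1)}$, and fix the leftover Pauli with \Cref{prop:Q2inverse} applied to $V^*U$.

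The one real difference is how you pass from a dense set of directions to all of $\F_d^{2n}$, and here the paper is simpler. It uses the $\varepsilon$-dependent threshold $1-3\varepsilon$ for $A$. Then $|A|\ge\frac{2}{3}d^{2n}$, and every $a\in A$ has an approximant within $\sqrt{2\delta}$, where $\delta=3C_{k-1}\varepsilon$. Since $|A|>\frac{1}{2}d^{2n}$, every $h$ equals $a+b$ with $a,b\in A$. Fixing one such decomposition and setting $Q_h:=W_bQ_aW_b^*Q_b$ gives an approximant of $\partial_hU$ for every $h$. Separation then yields the projective cocycle identity for all pairs at once, and at the end $|\langle\partial_hU,\partial_hV\rangle|^2\ge 1-8\delta$ holds uniformly in $h$.

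Your Markov-plus-majority-vote route also works. You should take the threshold to be a small constant $c$ depending only on $k$ and $C_{k-1}$, rather than $\varepsilon_{k-1}$, so that all good errors lie below the separation threshold. Linearity then comes from $\Pr[\eta_h>c]\le\varepsilon/c$ together with $\E_h\eta_h\le\varepsilon$.

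However, the BLR step is superfluous. Any two candidates $W_gV_{h-g}W_g^*V_g$ with $g,h-g$ good are both close to $\partial_hU$, and their ratio lies in the group $\cP^{(k-2)}$. So separation forces them all to coincide up to phase; that is the whole content of the self-correction.

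For the same reason, what you call the main obstacle is not one. Fix the phase of each $\tilde V_h$ so that $\tilde V_h$ itself is close to $\partial_hU$. Then the phases only need to be below $\pi/2$, so that the cocycle identity mod $2\pi$ holds over $\R$ and can be averaged. They play no role in the linear bound, since they disappear under the absolute values in the final fidelity estimate.

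There are two slips to correct.
\begin{enumerate}
\item The identity is $\partial_{h+g}U=W_h(\partial_gU)W_h^*\,\partial_hU$, with no factor $\tau^{\beta(h,g)}$; the scalars cancel in $W_{h+g}UW_{h+g}^*$.
\item More importantly, $UW_hU^*=(\partial_hU)^*W_h$. So the map to which you apply Stone--von Neumann must be $\tau^pW_h\mapsto c_h\tau^p\tilde V_h^*W_h$, which is the paper's $\phi$, not $W_h\mapsto\tilde V_hW_h$. The latter does not respect the Heisenberg group law in general. Even formally, $VW_hV^*\propto\tilde V_hW_h$ would give $\partial_hV\propto\tilde V_h^*$, not $\tilde V_h$ as you claim.
\end{enumerate}
The paper also makes the uniqueness step concrete: it normalizes $\phi(\tau^pI)=\tau^pI$ and checks that the characters of $\phi$ and the Weil representation agree.
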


\begin{proof}
We will use induction on $k$.
\\

\noindent
\textbf{Base case $(k=1)$\,:}
It is easy to see that $\cF_{\cC^{(0)}}\p{U} = \abs{\inner{I,U}}^2 = \abs{\f{1}{d^n}\tr\lrb{U}}^2$. 
By \Cref{lemma:Q1Q2alternateform}, this last expression is precisely $\norm{U}_{P^1}^2$.
The conclusion now follows with $\eps_1 = 1$ and $C_1 = 1$.
\\

\noindent
\textbf{Induction $(2\le k\le 4)$\,:}
By \Cref{fact:inductive}, we start with 
\begin{equation*}
    \norm{U}_{\Pnorm^k}^{2^k} = 
        \E_{a\in \Fdnn} \norm{\partial_a U}_{P^{k-1}}^{2^{k-1}}\ge 1-\vep.
\end{equation*}
Let $A = \lrs{a\in \Fdnn\mid \norm{\partial_a U}_{P^{k-1}}^{2^{k-1}}\ge 1-3\vep} \subseteq \Fdnn$.
Since $\norm{\partial_a U}_{P^{k-1}} \leq 1$, we have
\begin{equation*}
    1\cdot \abs{A} + (1-3\vep) \cdot \p{d^{2n} - \abs{A}} \ge \sum_{a\in \Fdnn} \norm{\partial_a U}_{P^{k-1}}^{2^{k-1}} \ge (1-\vep) d^{2n},
\end{equation*}
from which we conclude that $\abs{A}\ge \f{2}{3}d^{2n}$.
Assume $3\epsilon \leq \epsilon_{k-1}$, and let $\delta := 3C_{k-1}\epsilon$.
Then, by the induction hypothesis, for any $a\in A$ there exists $Q_a\in \cP^{(k-2)}$ such that 
    \begin{equation*}
        \abs{\inner{Q_a, \partial_aU}}^2 \ge 1- \delta.
    \end{equation*}
Here, without loss of generality, we let $Q_0 = I$.
Up to replacing each $Q_a$ by $e^{i\theta_a}Q_a$ for some phase $\theta_a$, by \Cref{fact:relation-2normdistance-innerproduct} we can assume that
$\norm{\partial_aU - Q_a}_2\le \sqrt{2\delta}$.
Since for any $h\in \Fdnn$ we have
$\abs{A} + \abs{h-A} \ge \abs{\Fdnn}$, there exist $a,b\in A$ such that $a+b=h$;
fix one such decomposition for each element $h$.
Because $a,b\in A$, we have
\begin{equation*}
    \norm{\partial_aU - Q_a}_2 
        \le \sqrt{2\delta}, \quad
    \norm{\partial_bU - Q_b}_2
        \le \sqrt{2\delta}.
\end{equation*}
The derivative of $U$ in direction $h=a+b$ can be written as
\begin{equation}\label{eq:h=a+b}
    \partial_hU = W_{a+b} U W_{a+b}^* U^*
        = W_b W_a U W_a^* W_b^* U^* 
        = W_b W_a U W_a^* U^* W_b^* W_b U W_b^* U^*
        = W_b \partial_aU W_b^* \partial_b U,
\end{equation}
and so we conclude from the triangle inequality that
\begin{align*}
    \norm{\partial_h U - W_bQ_aW_b^*Q_b}_2
        \le 2\sqrt{2\delta}.
\end{align*}
Now we define $Q_h:=W_bQ_aW_b^*Q_b$, so that the inequality above can be written as
\begin{equation}\label{eq:approx-with-f-Q}
    \norm{\partial_h U - Q_h}_2
        \le 2\sqrt{2\delta}.
\end{equation}
Note that each $Q_h$ thus defined is an element of $\cP^{(k-2)}$, since $\cP^{(k-2)}$ forms a group for $2\le k\le 4$.

The central part of the proof is to show how the collection $\{Q_h:h\in \F_d^{2n}\}$ can be turned into a unitary representation $\phi$ of the Pauli group $\cP_n(d)$.
Towards this goal, we first use Equation \eqref{eq:h=a+b} to show that $Q_{h+h'}$ can be written as a product of $Q_h,Q_{h'}$ and $W_h,W_{h'}$.
For all $h,h'\in \Fdnn$, we have
\begin{align*}
    \norm{\partial_h U - Q_h}_2
        &\le 2\sqrt{2\delta}, \\
    \norm{\partial_{h'} U - Q_{h'}}_2
        &\le 2\sqrt{2\delta}, \\
    \norm{\partial_{h+h'} U - Q_{h+h'}}_2
        &\le 2\sqrt{2\delta}.
\end{align*}
Combining this with \Cref{eq:h=a+b}, we conclude from the triangle inequality that
\begin{equation} \label{eq:combined}
    \norm{Q_{h+h'} - W_{h'}Q_hW_{h'}^* Q_{h'}}_2
        \le 6\sqrt{2\delta}.
\end{equation}
By the Separation Lemma (\Cref{lemma:separation}), if $6\sqrt{2\delta} < 2^{-(k-2)+3/2}$, then for all $h,h'$ there exists $\theta_{h',h}\in \lrb{-12\sqrt{2\delta}, 12\sqrt{2\delta}}$ such that
\begin{equation*}
    e^{i\theta_{h',h}}Q_{h+h'} = W_{h'}Q_hW_{h'}^* Q_{h'}.
\end{equation*}
Recall that $Q_0=I$,
which implies that $\theta_{0,h}=0$ for all $h\in \Fdnn$.
The equation above is equivalent to
\begin{align*}
    Q_{h+h'}^*W_{h+h'}
        &= e^{i\theta_{h',h}}Q_{h'}^*W_{h'}Q_h^*W_{h'}^*W_{h+h'} \\
        &= e^{i\theta_{h',h}}Q_{h'}^*W_{h'} Q_h^*
             W_{h'}^*\tau^{-\beta\p{h',h}}W_{h'}W_h \\
        &= e^{i\theta_{h',h}}\tau^{-\beta\p{h',h}} Q_{h'}^*W_{h'} \cdot Q_h^*W_h,
\end{align*}
where we use \Cref{fact:weyl-calc} in the second equality.
Define the map $\psi:\cP_n(d)\rightarrow \U\p{d^n}$ by $\psi\p{\tau^pW_h} = \tau^p Q_h^*W_h$.
Then we have
\begin{equation*}
    \psi\p{W_{h'+h}} = e^{i\theta_{h',h}}\tau^{-\beta\p{h',h}} \psi\p{W_{h'}} \psi\p{W_h}.
\end{equation*}
By using different orders of association, for any $h,h',h''\in \Fdnn$ we have that
\begin{align*}
    \psi\p{W_{h+h'+h''}} &= e^{i\theta_{h,h'+h''}}\tau^{-\beta\p{h,h'+h''}} \psi\p{W_{h}} \psi\p{W_{h'+h''}} \\
        &= e^{i\theta_{h+h',h''}}\tau^{-\beta\p{h+h',h''}} \psi\p{W_{h+h'}} \psi\p{W_{h''}}.
\end{align*}
Expanding $\psi(W_{h'+h''})$ and $\psi(W_{h+h'})$, we conclude that
\begin{equation*}
    e^{i\theta_{h,h'+h''}} e^{i\theta_{h',h''}} \tau^{-\beta\p{h,h'+h''}} \tau^{-\beta\p{h',h''}}
    = e^{i\theta_{h+h',h''}} e^{i\theta_{h,h'}} \tau^{-\beta\p{h+h',h''}} \tau^{-\beta\p{h,h'}}.
\end{equation*}
Combining this with the cocycle condition on $\beta$ (\Cref{fact:beta}), we obtain
\begin{equation}\label{eq:cocylethetamod2pi}
\theta_{h,h'+h''}+\theta_{h',h''}=\theta_{h,h'}+\theta_{h+h',h''} \mod 2\pi.
\end{equation}
We choose $\varepsilon$ small enough such that $12\sqrt{2\delta}<\frac{\pi}{2}$, which ensures that for any $h,h'\in \Fdnn$, $\abs{\theta_{h,h'}} <\f{\pi}{2}$. This means that \Cref{eq:cocylethetamod2pi} is still true over the real numbers, namely
\begin{equation*}
\theta_{h,h'+h''}+\theta_{h',h''}=\theta_{h,h'}+\theta_{h+h',h''}.
\end{equation*}
If we average the equation above over $h''$, we obtain
\begin{equation*}
    \theta_{h,h'} = b_h + b_{h'} - b_{h+h'},
\end{equation*}
where $b_h := \E_{h''\in\Fdnn}\theta_{h,h''}$.
Note that $b_0 = 0$ because $\theta_{0,h}=0$ for all $h\in \Fdnn$.
Defining the map $\phi:\cP_n(d)\rightarrow \U\p{d^n}$ by
\begin{equation} \label{eq:def_phi}
    \phi\p{\tau^pW_h} = e^{ib_h}\psi(\tau^pW_h) = e^{ib_h} \tau^p Q_h^* W_h,
\end{equation}
we conclude that
\begin{equation*}
    \phi\p{W_{h'+h}} = \tau^{-\beta\p{h',h}} \phi\p{W_{h'}} \phi\p{W_h}.
\end{equation*}

This map $\phi$ is the representation we wanted.
Indeed, it directly follows from the equation above and \Cref{fact:weyl-calc} that for any $h,h'\in \Fdnn$,
\begin{equation*}
    \phi\p{W_{h'}W_h} = \phi\p{\tau^{\beta\p{h',h}}W_{h+h'}}
        = \tau^{\beta\p{h',h}} \phi\p{W_{h+h'}}
        = \phi\p{W_{h'}}\phi\p{W_h}.
\end{equation*}
Furthermore, for any $p,p'\in \Z_D$, $h,h'\in \Fdnn$, we have that
\begin{equation*}
    \phi\p{\tau^{p'}W_{h'}\tau^pW_h} = \tau^{p'}\tau^{p}\phi\p{W_{h'}W_h}
        = \tau^{p'}\tau^{p}\phi\p{W_{h'}}\phi\p{W_h}
        = \phi\p{\tau^{p'}W_{h'}}\phi\p{\tau^{p}W_h}.
\end{equation*}
We conclude that $\phi\p{PQ} = \phi\p{P}\phi\p{Q}$ for all $P,Q\in \cP_n(d)$, and thus $\phi$ forms a unitary representation of the Pauli group $\cP_n(d)$.
This implies that $\phi$ is a representation of the Heisenberg group.

We next show that $\phi$ is unitarily equivalent to the Weil representation of the Heisenberg group, by computing their characters.
Let $p\in \Z_D$ and $h\in \Fdnn$.
The character of the Weil representation at $(p, h)$ is given by
\begin{equation*}
    \tr\p{\tau^p W_h} = \delta_{h,0} \tau^p d^n.
\end{equation*}
For the character of $\phi$ when $h=0$, since $\phi\p{\tau^pI} = e^{ib_0}Q_0^*\tau^p I=\tau^p I$, we have
\begin{equation*}
    \tr\lrb{\phi\p{\tau^pI}} = \tr\lrb{\tau^pI} = \tau^pd^n.
\end{equation*}
For $h\in \Fdnn\setminus\{0\}$, choose some $h'\in \Fdnn$ such that $[h',h]\neq 0$, and let $h'':=h - h'$.
We have
\begin{align*}
    \tr\lrb{\phi\p{\tau^pW_h}} 
        = \tr\lrb{\phi\p{\tau^pW_{h'+h''}}} 
        &= \tau^p\tau^{-\beta\p{h'',h'}}\tr\lrb{\phi\p{W_{h''}}\phi\p{W_{h'}}} \\
    &= \tau^p\tau^{-\beta\p{h'',h'}}\tr\lrb{\phi\p{W_{h'}}\phi\p{W_{h''}}} \\
    &= \tau^p\tau^{\beta\p{h',h''}-\beta\p{h'',h'}}\tr\lrb{\phi\p{W_h}},
\end{align*}
where the second and fourth equalities follow from \Cref{fact:weyl-calc} and the third equality follows from the cyclic property of the trace function.
From \Cref{fact:beta} we conclude that
\begin{equation*}
    \tr\lrb{\phi\p{\tau^pW_h}} = \tau^{2[h',h'']}\tr\lrb{\phi\p{\tau^pW_h}},
\end{equation*}
and since $[h', h''] = [h', h] \neq 0$,
\begin{equation*}
    \tr\lrb{\phi\p{\tau^pW_h}}  = 0 = \delta_{h,0}\tau^pd^n.
\end{equation*}
It follows that these two unitary representations have the same character, which implies that they are unitarily equivalent
(see Lemma 10.1.4 and Corollary 10.2.24 in \cite{CSTbook2018}).
This means that there exists a unitary $V$ such that, for any $h\in \Fdnn$,
\begin{equation*}
    VW_hV^*=\phi\p{W_h}.
\end{equation*}
By the definition of $\phi$ (\Cref{eq:def_phi}), this implies that 
\begin{equation*}
    Q_h = e^{ib_h}\partial_h V.
\end{equation*}
Combining the above identity with \Cref{eq:approx-with-f-Q} and \Cref{fact:relation-2normdistance-innerproduct}, we conclude that, for all $h\in \Fdnn$, we have
\begin{equation*}
    \abs{\inner{\partial_h U,e^{ib_h}\partial_h V}} =
    \abs{\inner{\partial_h U,Q_h}} \geq 
    1 - \frac{1}{2}\norm{\partial_h U - Q_h}_2^2
    \ge 1 - 4\delta.
\end{equation*}
It follows that $\abs{\inner{\partial_h U,\partial_h V}}^2 \ge 1 - 8 \delta$ for all $h\in \Fdnn$.
Now note that we can rewrite
    \begin{align*}
        \E_{h\in \Fdnn} \abs{\inner{\partial_h U, \partial_h V}}^2 =
            \E_{h\in \Fdnn} \abs{\f{1}{d^n} 
                \tr \lrb{\p{W_hUW_h^*U^*}^*\p{W_hVW_h^*V^*}}}^2  = \norm{U^*V}_{P^2}^4,
    \end{align*}
    where the second equality follows by \Cref{lemma:Q1Q2alternateform}.
By definition, there must be a $P\in \cP^{(1)}$ such that $\cF_{\cC^{(1)}}\p{U^*V} = |\langle P,U^*V\rangle|^2$.
It now follows from the $k\!=\!2$ inverse theorem (\Cref{prop:Q2inverse}) that this $P$ has the property
\begin{equation*}
    \abs{\inner{U,VP^*}}^2 = \cF_{\cC^{(1)}}\p{U^*V} \ge \norm{U^*V}_{P^2}^4 \ge 1 - 8\delta.
\end{equation*}
Since $\partial_hV = e^{-ib_h} Q_h\in \cP^{(k-2)}$ for all $h$, we have that $V\in \cP^{(k-1)}$, and thus by \Cref{lemma:polynomials-closed-pauli-mult} we have that $VP^*$ is in $Q^{(k-1)}$ as well.
We conclude that $\cF_{\cC^{(k-1)}}\p{U}\ge 1 - 8\delta = 1- 24C_{k-1} \eps$, which is our desired inequality with $C_k := 24C_{k-1}$.

To determine the parameters $\epsilon_k$ and $C_k$, we used in our arguments the following assumptions:
\begin{align*}
    \eps &\le \eps_k \le \f{1}{3}\eps_{k-1}, \\
    6\sqrt{2\delta} &= 6\sqrt{6C_{k-1}\eps} < 2^{-(k-2)+3/2}, \\
    12\sqrt{2\delta} &= 12\sqrt{6C_{k-1}\eps} < \frac{\pi}{2}.
\end{align*}
For $2\leq k\leq 4$, these assumptions are all satisfied for the choice of parameters
\begin{equation*}
    \eps_k = \f{1}{24^{k}} \quad \text{and} \quad C_k = 24^{k-1},
\end{equation*}
finishing the proof.
\end{proof}
Note that \Cref{thm:intro-main} immediately follows from this result by taking $L = 16\max\{C_4, 1/\eps_4\}$.

\section{A quantum estimator for the Pauli uniformity norms} \label{sec:tester}

We next introduce an algorithm that, when given query access to a unitary $U$ and its inverse $U^*$, accepts~$U$ with probability $\f{1}{2}\p{1 + \norm{U}_{\Pnorm^k}^{2^k}}$.
This algorithm is a natural generalization of earlier algorithms of Low \cite{low2009learning} and Wang \cite{wang2011property}.

\begin{algorithm}
\caption{\textsc{PNormBias}$(n, d, k, U, U^*)$}
\label{alg:u-norm-estimator}

\begin{algorithmic}[1]
\Input{Natural numbers $n,d,k$, oracle access to a unitary $U\in \U\p{d^n}$ and to $U^*$}
\Output{$0$ or $1$}

\item If $k = 1$, prepare two copies of the maximally entangled state
$\ket{\Phi} = \f{1}{\sqrt{d^n}} \sum_{i\in \Fdn} \ket{i,i}$.
Perform a swap test on $\ket{\Phi}$ and $\p{U\otimes I}\ket{\Phi}$ and return the result.
	
\item Sample $h\leftarrow \Fdnn$ uniformly at random, and prepare oracle access to $\partial_h U$ and $\p{\partial_h U}^*$.

\item Return \textsc{PNormBias}$\p{n, d, k-1, \partial_h U, \p{\partial_h U}^*}$.

\end{algorithmic}
\end{algorithm}

\begin{lemma}\label{lemma:estimator}
    The probability that \textsc{PNormBias}$(n, d, k, U, U^*)$ outputs $1$
        is $\f{1}{2}\p{1 + \norm{U}_{\Pnorm^k}^{2^k}}$.
\end{lemma}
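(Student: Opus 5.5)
We argue by induction on $k$, using the nesting identity \eqref{fact:inductive}.

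\textbf{Base case $k=1$.} The algorithm performs a swap test on $\ket{\Phi}$ and $\ket{\Psi} := (U\otimes I)\ket{\Phi}$. It is standard that the swap test on pure states $\ket{\Phi}$ and $\ket{\Psi}$ outputs $1$ (the symmetric outcome) with probability $\frac12\big(1 + \abs{\braket{\Phi}{\Psi}}^2\big)$. For the maximally entangled state we have the identity $\bra{\Phi}(U\otimes I)\ket{\Phi} = d^{-n}\tr[U]$. Hence the acceptance probability is
\[
\tfrac12\big(1 + d^{-2n}\abs{\tr[U]}^2\big),
\]
which equals $\frac12\big(1 + \norm{U}_{P^1}^2\big)$ by \Cref{lemma:Q1Q2alternateform}.

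\textbf{Simulating the derivative.} To recurse, we must implement oracle access to $\partial_h U$ and its inverse. We have $\partial_h U = W_h U W_h^* U^*$, which uses one call to $U$, one call to $U^*$, and known Weyl operators. Its inverse is $(\partial_h U)^* = U W_h U^* W_h^*$, which again uses one call to each of $U$ and $U^*$. Since $\partial_h U$ is unitary whenever $U$ is, the recursive call is well-defined.

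\textbf{Inductive step.} Assume the claim holds for $k-1$ and every unitary. Conditioned on the sampled $h$, the recursive call outputs $1$ with probability
\[
\tfrac12\Big(1 + \norm{\partial_h U}_{P^{k-1}}^{2^{k-1}}\Big).
\]
Averaging over uniform $h\in\Fdnn$ and applying \eqref{fact:inductive}, we obtain
\[
\tfrac12\Big(1 + \E_h \norm{\partial_h U}_{P^{k-1}}^{2^{k-1}}\Big) = \tfrac12\Big(1 + \norm{U}_{P^k}^{2^k}\Big).
\]
The only nontrivial point is the base-case fidelity identity, and it is a routine calculation.
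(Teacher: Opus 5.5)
Your proof is correct and follows the paper's argument essentially verbatim: induction on $k$, with the base case from the swap-test acceptance probability $\frac12\big(1+\abs{\bra{\Phi}(U\otimes I)\ket{\Phi}}^2\big)$, the identity $\bra{\Phi}(U\otimes I)\ket{\Phi}=d^{-n}\tr[U]$ and \eqref{lemma:Q1Q2alternateform}, and the inductive step by averaging the recursive acceptance probability over $h$ and applying the nesting identity \eqref{fact:inductive}. Your extra remark that $\partial_h U$ and $(\partial_h U)^* = U W_h U^* W_h^*$ can each be implemented with one query to $U$ and one to $U^*$ is a correct detail that the paper leaves implicit.
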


\begin{proof}
    We will use induction on $k$.
    When $k=1$, we have
    \begin{align*}
        \Pr\lrb{\textsc{PNormBias}(n, d, 1, U, U^*)=1}
            &= \f{1}{2}\p{1 + \abs{\bra{\Phi}U\otimes I \ket{\Phi}}^2} \\
            &= \f{1}{2}\p{1 + \abs{\f{1}{d^n}\tr\lrb{U}}^2} \\
            &= \f{1}{2}\p{1 + \norm{U}_{P^1}^{2}},
    \end{align*}
    where the last equality follows from \Cref{lemma:Q1Q2alternateform}.
    For any $k>1$, the probability that \textsc{PNormBias}$(n, d, k, U, U^*)$ outputs $1$ is equal to
    \begin{align*}
        \E_{h\in \Fdnn} \Pr\lrb{\textsc{PNormBias}\p{n, d, k-1, \partial_h U, \p{\partial_h U}^*}=1}
        &= \E_{h\in \Fdnn}\lrb{\f{1}{2}\p{1 + \norm{\partial_h U}_{P^{k-1}}^{2^{k-1}}}} \\
        &= \f{1}{2}\p{1+\norm{U}_{\Pnorm^k}^{2^k}},
    \end{align*}
    where the last equality follows from \Cref{fact:inductive}.
\end{proof}

We are finally ready to present our tester for the third level of the Clifford hierarchy, \Cref{alg:C3Tester} below.
This algorithm invokes \textsc{PNormBias} several times to estimate the Pauli uniformity norm of the input unitary, then decides whether the unitary is close or far from the third Clifford level from the estimated value.

\begin{algorithm}
\caption{\textsc{C3Tester}$(n,d, U, U^*,\vep)$}
\label{alg:C3Tester}

\begin{algorithmic}[1]
\Input{Natural numbers $n,d$, oracle access to $U\in \U\p{d^n}$ and $U^*$, real number $0 < \vep \leq 1$.}	
\Output{$0$ or $1$.}

\item Invoke \textsc{PNormBias}$\p{n,d,4,U,U^*}$ for $O\p{\f{1}{\eps^2}}$ times to estimate $\norm{U}_{P^4}^{16}$, up to additive error $\vep$, with probability $9/10$.
Let $E$ be the estimated value.

\item If $E\le 1- 17\eps$, output $0$; otherwise, output $1$.
\end{algorithmic}
\end{algorithm}

\begin{proof}[Proof of \Cref{thm:main-tester}]
    Let $L$ be the constant guaranteed by \Cref{thm:intro-main}.
    We will show that the conclusion of \Cref{thm:main-tester} holds for $C = \max\{2L, 25\}$.
    Note that, for this value of $C$, we may assume without loss of generality that $\eps \leq 0.04$.

    If $\cF_{\cC^{(3)}}\p{U}\ge 1 - \vep$, by \Cref{prop:directinequality} we have that $\norm{U}_{P^4} \ge 1-\vep$, and thus $\norm{U}_{P^4}^{16} \ge 1-16\vep$.
    On the other hand, if $\cF_{\cC^{(3)}}\p{U}\le 1 - 2L\vep$, by \Cref{thm:intro-main} we have that $\norm{U}_{P^4}\le 1-2\vep$.
    Assuming $\eps \leq 0.04$, we conclude that $\norm{U}_{P^4}^{16} \le (1-2\vep)^{16} \leq 1-18\vep$.
    It is then possible to distinguish between these two cases by estimating $\norm{U}_{P^4}^{16}$ up to an additive error $\vep$.
    By \Cref{lemma:estimator} and the Chernoff-Hoeffding inequality, this can be made by running \Cref{alg:u-norm-estimator} $O(1/\vep^2)$ times.
    One can improve the query complexity to $O(1/\vep)$ by a standard application of amplitude estimation on \Cref{alg:u-norm-estimator}.
\end{proof}

Note that a similar tester can be obtained for any given level $k$ of the Clifford hierarchy, as long as one has a corresponding 99\% inverse theorem for the $(k+1)$-st Pauli norm.

\section{Towards a 99\% inverse theorem for $\| U\|_{\Pnorm^k}$ for $k\ge 5$} \label{sec:barrier}

We expect that a $99\%$ inverse theorem for $\norm{U}_{\Pnorm^k}$ holds for every $k$ by extending the inductive strategy from \Cref{sec:99inverse}.
However, the argument in its current form applies only for $1\leq k\leq 4$.
The purpose of this section is to identify the main obstruction to continuing the induction at higher levels, and to record ideas and conjectures that may be useful in overcoming it.
We also discuss a possible approach to the case of $\norm{U}_{\Pnorm^5}$ based on taking two derivatives, inspired by the proof of the inverse theorem for the fifth Gowers uniformity norm in the commutative setting \cite{gowers2017quantitative}.

\subsection{Group barrier and ideas for extending the inductive argument}

To obtain a 99\% inverse theorem for $\norm{U}_{\Pnorm^k}$ for arbitrary $k$ via the inductive argument above, we would need $\cP^{(k-2)}$ to be closed under multiplication.
Indeed, after applying the induction hypothesis in the proof of \Cref{thm:main}, we conclude that for any $h\in A$ there is some $Q_h \in \cP^{(k-2)}$ that approximates the derivative $\partial_h U$.
For $h\notin A$, we instead find $a,b\in A$ satisfying $a+b=h$ and assign $Q_h = W_bQ_aW_b^*Q_b$ to approximate $\partial_h U$
(See \Cref{eq:approx-with-f-Q}).
To make sure that $Q_h$ is also inside $\cP^{(k-2)}$, one seems to require a multiplicative closure property for $\cP^{(k-2)}$.

Note that this property holds for $k\in \{2, 3, 4\}$:
$\cP^{(0)}$ is the circle group times the identity, $\cP^{(1)}$ is the circle group times the Pauli group, and $\cP^{(2)}$ is the Clifford group.
However, for $k=5$, $\cP^{(k-2)}=\cC^{(3)}$ is the third level of Clifford hierarchy, which is not a group;
for instance, the matrix $HT$ (on one qubit) lies in $\cC^{(3)}$, but $(HT)^2$ is not in any level of Clifford hierarchy
\cite[Page 2]{anderson2024groups}.
Therefore, closure under multiplication does not hold for $\cP^{(k-2)}$ for any $k\ge 5$.

What we obtain in the case $k\geq 5$ is that, for any $h\notin A$, $\partial_h U$ is approximated by an element of $\cP^{(k-2),2} := \cP^{(k-2)}\cdot \cP^{(k-2)}$.
There are two obstacles that must be overcome in this case.
First and foremost, after we obtain the final unitary $V$ from the representation-theoretic argument, we only know that $\partial_h V\in \cP^{(k-2)}$ for all $h\in A$, which is not sufficient to imply $V\in \cP^{(k-1)}$.
Following the proof of \Cref{thm:main}, for any $h\in \Fdnn$ there exist $a,b\in A$ such that $a+b=h$, and thus $VW_hV^* = VW_aV^*\cdot VW_bV^*\in \cP^{(k-2),2}$.
Trying to force this unitary to be inside $\cP^{(k-2)}$ leads us to the following conjecture, which was made for an unrelated reason in \cite{de2021efficient}.
    
\begin{conjecture}[Conjecture 1 of \cite{de2021efficient}, rephrased]
    For any $n$-qudit unitary $V$ and natural number $k$, 
        if $VX_iV^*, VZ_iV^*\in \cC^{(k)}$ for all $i=1,2,\dots,n$,
        then we have $V\in \cC^{(k+1)}$,
    where $X_i = I^{\otimes {i-1}}\otimes X\otimes I^{\otimes n-i}$
        and $Z_i = I^{\otimes {i-1}}\otimes Z\otimes I^{\otimes n-i}$.
\end{conjecture}

Note that, although we have a counter-example for the closure of $\cP^{(k-2)}$ under multiplication, in the conjecture above we only consider elements of $\cP^{(k+2)}$ with finite orders and satisfying explicit commutator relations.
The hope is that one might exploit these constraints to derive the desired closure property.

The second obstacle is that we need a separation lemma (i.e. Lemma \ref{lemma:separation}) for $\cP^{(k-2),6}$ (due to \Cref{eq:combined}).
In other words, we need to argue that if a product of six unitaries in $\cP^{(k-2)}\cup \p{\cP^{(k-2)}}^*$ is close to the identity, then it is equal to the identity up to some global phase. 

\subsection{Double derivative barrier for a $P^5$-inverse theorem}

To prove an inverse theorem for $P^5$, it is natural to try to take two derivatives.
Indeed, taking two derivatives of an element of the fourth level of the Clifford hierarchy yields an element of the Clifford group.
A related double-derivative strategy appears in \cite{gowers2017quantitative}, where Gowers and Milićević prove an inverse theorem for the fourth Gowers norm in the commutative setting.

Suppose that $U$ is a unitary with large $P^5$ norm.
Following the first step in the proof of \Cref{thm:main}, one can show that there exists a set $A \subseteq \Fdnn \times \Fdnn$ of size $\abs{A}\geq 0.99 d^{4n}$ such that, for every $(a,b)\in A$, the norm $\norm{\partial_a\partial_b U}_{\Pnorm^3}$ is large.
By the inverse theorem for the $\Pnorm^3$ norm, it follows that for each $(a,b)\in A$ there exists a Clifford unitary $Q_{a,b}\in \cC^{(2)}$ such that $Q_{a,b} \approx \partial_a\partial_b U$.
The next step would be to extend this conclusion from the dense set $A$ to all pairs $(a,b)\in \Fdnn\times \Fdnn$.

Since the double derivatives $\partial_a\partial_b U$ for $(a,b)\in A$ are approximated by Clifford unitaries, one might hope to multiply these approximations together in order to obtain approximations for other derivatives. This works in the first coordinate. For instance, if $(a,b),(a',b)\in A$, then
\begin{equation*}
    \partial_{a+a'}\partial_b U =
    W_{a'} \partial_a\partial_b U W_{a'}^* \partial_{a'}\partial_b U \approx
    W_{a'} Q_{a,b} W_{a'}^* Q_{a',b}.
\end{equation*}
Since the Clifford group is closed under multiplication and conjugation by Pauli operators, the right-hand side lies in $\cC^{(2)}$. Thus one can define
\begin{equation*}
    Q_{a+a',b} := W_{a'} Q_{a,b} W_{a'}^* Q_{a',b} \in \cC^{(2)},
\end{equation*}
as a Clifford approximation to $\partial_{a+a'}\partial_b U$.

The difficulty is that the analogous argument does not seem to work in the second coordinate.
To repeat the same strategy, one would need to express $\partial_a\partial_{b+b'}U$ in terms of $\partial_a\partial_b U$ and $\partial_a\partial_{b'}U$.
However, the relevant expressions do not combine in the required way.
Indeed, for $(a,b),(a,b')\in A$, we have
\begin{align*}
    \partial_a\partial_{b+b'}U &= W_aW_bW_{b'}UW_{b'}^*W_b^*U^*W_a^*UW_bW_{b'}U^*W_{b'}^*W_b^*, \\
    \partial_a\partial_b U &= W_aW_bUW_b^*U^*W_a^*UW_bU^*W_b^*, \\
    \partial_a\partial_{b'} U &= W_aW_{b'}UW_{b'}^*U^*W_a^*UW_{b'}U^*W_{b'}^*.
\end{align*}
There seems to be no direct multiplicative formula analogous to the first coordinate case.

\bibliographystyle{alphaurl}
\bibliography{bib}
\end{document}